\newcommand*{\APPENDIX}{}
\title{Special-case Algorithms for Blackbox Radical Membership, Nullstellensatz and Transcendence Degree}
\author{
Abhibhav Garg \\ CSE, Indian Institute of Technology Kanpur \\ \texttt{abhibhav@cse.iitk.ac.in}
\and
Nitin Saxena \\ CSE, Indian Institute of Technology Kanpur \\ \texttt{nitin@cse.iitk.ac.in}
}
\date{}
\begin{document}
\maketitle
\begin{abstract}
Radical membership testing, resp.~its special case of Hilbert's Nullstellensatz (HN), is a fundamental computational algebra problem.
It is NP-hard; and has a famous PSPACE algorithm due to {\em effective} Nullstellensatz bounds.
We identify a useful case of these problems where practical algorithms, \& improved bounds, could be given--- When transcendence degree $r$ of the input polynomials is smaller than the number of variables $n$.
If $d$ is the degree bound on the input polynomials, then we solve radical membership (even if input polynomials are {\em blackboxes}) in around $d^r$ time.
The prior best was $> d^n$ time (always, $d^n\ge d^r$). Also, we significantly improve effective Nullstellensatz degree-bound, when $r\ll n$.

Structurally, our proof shows that these problems reduce to the case of $r+1$ polynomials of transcendence degree $\ge r$.
This input instance (corresponding to none or a {\em unique annihilator}) is at the core of HN's hardness.
Our proof methods invoke basic algebraic-geometry.
\end{abstract}

\section{Introduction}

Given a set of polynomials $f_{1}, \dots, f_{n}$, there is a natural certificate for the existence of a common root, namely the root itself.
Hilbert's Nullstellensatz \cite{rabinowitsch1930hilbertschen, zariski1947new, krull1950jacobsonsches} states that there is also a natural certificate for the nonexistence of a common root, when the underlying field is {\em algebraically closed}.
Formally, the theorem states that the polynomials have no common root if and only if there exist polynomials $g_{1}, \dots, g_{n}$ such that $1 = \sum f_{i} g_{i}$.
We refer to the latter type of certificate as a {\em Nullstellensatz certificate}.
These certificates are not polynomial sized: every common root can have exponential bit complexity, and every set of witness polynomials $g_{i}$ can have exponential degrees.
This problem is naturally of computational interest, since the generality of the statement affords reductions from many problems of interest.
{\em Effective} versions of the Nullstellensatz have been extensively studied \cite{jelonek, krick2001sharp, krick1999arithmetic, sombrasparse, sombrahilbert, berensteinstruppasurvey, kollar, brownawell}, and they allow the decision problem of existence of common roots (called HN) to be solved in polynomial space.
Koiran \cite{koiranph} proved that under generalized Riemann Hypothesis, HN can be solved in AM \cite[Ch.8]{arora2009computational}, for fields of characteristic zero.

In this work, we relate the complexity of HN to the transcendence degree of the input polynomials.
The {\em transcendence degree} of polynomials $f_{1}, \dots, f_{m}$ is defined as the size of any maximal subset of the polynomials that are algebraically independent.
This notion is well defined since algebraic independence satisfies matroid properties \cite{oxley}.
We show that HN can be solved in time single-exponential in transcendence degree
This can be seen as a generalization of the fact that HN can be solved in time exponential in the number of polynomials (or variables) in the system. We state our result in terms of the question of {\em radical membership}: $f_0 \in? \sqrt{\langle f_{1}, \dots, f_{m} \rangle}$. Note that the standard algorithms for both ideal membership \cite{hermann} and radical computation \cite{laplagne2006algorithm} are far slower than ours. 

\emph{
  Given a set of polynomials $f_{1}, \dots, f_{m}$ with transcendence degree at most $r$, as blackboxes, we can perform radical membership tests for the ideal generated by $f_{1}, \dots, f_{m}$ in time polynomial in $d^r,m,n$, where $d$ is the degree-bound on the polynomials and $n$ is the number of variables.
}

We also relate the transcendence degree of the input polynomials to the degrees of the Nullstellensatz certificates, that is the degrees of $g_{i}$ in $\sum f_{i} g_{i} = 1$; improving the best bounds by \cite{jelonek}.

\emph{
  Given a set of polynomials $f_{1},\dots, f_{m}$ with transcendence degree $r$ and without any common roots, there exist polynomials $g_{i}$ of degree at most $d^{r+1}$ such that $\sum f_{i} g_{i} = 1$.
}

We also give an output-sensitive algorithm to compute the transcendence degree of polynomials.
Slightly more formally, we show:

\emph{
  Given a set of polynomials $f_{1}, \dots, f_{m}$, we can compute their transcendence degree in time polynomial in $d^{r}$ and $m,n$.
}


\vspace{-1mm}
\subsection{Previously known results}
All three of the problems stated above have been extensively studied.
We therefore only list some of the previously known results, and direct readers to the surveys \cite{mayrsurvey, berensteinstruppasurvey}.

{\bf Nullstellensatz.} The decidability of the ideal membership problem was established by Hermann \cite{hermann} when she proved a doubly-exponential bound on witnesses to ideal membership.
A lower bound of the same complexity by Mayr and Meyer \cite{mayr89, mayrmeyer82} showed that this problem is EXPSPACE complete.
A number of different algorithms were developed for operations on ideals, most prominently the method of Gr\"obner basis \cite{gbthesis}.
The proof of single-exponential bounds for the Nullstellensatz (discussed below) allowed special cases of the ideal membership problem, such as the case of unmixed and zero dimensional ideals to be solved in single-exponential time \cite{dfgsunmixed}.
It also allowed the general Nullstellensatz problem to be solved in PSPACE.
Giusti and Heintz \cite{ghf} proved that the dimension of a variety can be computed by a randomized algorithm in single-exponential time, with the exponent being linear in $n$, which gives an algorithm of the same complexity for HN (by testing if the dimension is $-1$).
All of the above results are independent of the underlying field characteristic.
In 1996, Koiran \cite{koiranph} gave an AM protocol (conditioned on GRH) for the Nullstellensatz problem, when the underlying field is $\bC$ and the polynomials have integer coefficients.
His method is completely different from the previous methods (of using the effective Nullstellensatz to reduce the system to a linear one).
The positive characteristic case is an open problem, and the best known complexity remains PSPACE.

{\bf Effective Nullstellensatz.} The projective version of the effective Nullstellensatz follows from the fundamental theorem of elimination theory \cite{lazard1977}.
An affine version was first proved by Brownawell \cite{brownawell} in characteristic $0$ using analytic methods.
It was later improved by Kollár \cite{kollar} who used local cohomology to improve the bounds and remove the condition on the characteristic.
A more elementary proof that used bounds on the Hilbert function was given by Sombra \cite{sombrahilbert}, who also gave improved bounds based on some geometric properties of related varieties \cite{sombrasparse}.
An even more elementary and significantly shorter proof was given by Jelonek \cite{jelonek}, who obtained improved bounds when the number of polynomials is lesser than the number of variables.

{\bf Transcendence degree.} Algebraic independence was studied in computer science by \cite{dgw} in their study of explicit extractors.
They proved that the rank of the Jacobian matrix is the same as the transcendence degree for fields of characteristic zero (or large enough) which gives an efficient randomized method for computing the transcendence degree.
The problem was studied further in \cite{kayal}, where the condition on the characteristic for the above algorithm was relaxed, and some hardness results were established.
\cite{gss} showed that the problem is in coAM $\cap$ AM, making it unlikely to be NP-hard, and conjecturing that the problem is in coRP for all characteristics.
Algorithmically, the best known method for computing the transcendence degree in fields of positive characteristic still has PSPACE complexity, by using the bounds of Perron \cite{perron1951algebra, ploski} to reduce the problem to solving an exponential sized linear system.
This method takes time polynomial in $d^{r^{2}}$ using the methods of \cite{csanky}.
We refer the reader to the thesis \cite{amitthesis} for an exhaustive survey of related results; and applications in \cite{ASSS11,PSS16}.

Certain radical membership methods were developed by Gupta \cite{ankitgupta} in his work on deterministic polynomial identity testing algorithms for heavily restricted depth-four circuits.
The focus there however was on a {\em deterministic algorithm} for the above problem.
Further, he restricts his attention to systems where the underlying field is $\bC$.

\vspace{-1mm}
\subsection{Our results}
Our algorithms will be Monte Carlo algorithms.
We assume that our base field $k$ is algebraically closed, but our algorithms only use operations in the field in which the coefficients of the inputs lie, which we denote by $k_{i}$.
For example, $k_{i}$ might be $\bF_{p}$, and $k$ would then be $\overline{\bF_{p}}$.
By time complexity we mean operations in $k_{i}$, where operations include arithmetic operations, finding roots, and computing GCD of polynomials.
Our results are valid for any field where the above procedures are efficient, for example finite fields.

We relate the complexity of radical membership, and the degree bounds in effective Nullstellensatz, to the transcendence degree of the input set of polynomials.
We do this by showing that given a system of polynomials, we can reduce both the number of variables and the number of polynomials to one more than the transcendence degree, while preserving the existence (resp.~non-existence) of common roots.
In particular, when the transcendence degree of the input polynomials is constant, we get efficient algorithms for these problems.
\begin{theorem}[Radical membership]
  \label{thm:main}
  Suppose $f_{1}, \dots, f_{m}$ and $g$ are polynomials, in variables $x_{1}, \dots, x_{n}$, of degrees $d_{1}, \dots, d_{m}$ and $d_{g}$ respectively, given as blackboxes.
  Suppose that $\trdeg{f_{1}, \dots, f_{m}} \leq r$.
  Define $d:= \max (\max_{i} d_{i}, d_{g})$.
  
  Then, testing if $g$ belongs to the radical of the ideal generated by $f_{1}, \dots, f_{m}$ can be done in time polynomial in $n, m$ and $d^{r}$, with randomness.
\end{theorem}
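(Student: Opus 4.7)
The plan reduces the instance, via three geometric stages, to a system of $O(r)$ polynomials in $O(r)$ variables that we then solve directly.

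\emph{Reducing the number of polynomials.} The hypothesis $\trdeg{f_{1}, \ldots, f_{m}} \le r$ says that the image of the polynomial map $F := (f_1, \ldots, f_m) : \mathbb{A}^n \to \mathbb{A}^m$ has Zariski closure $Y$ of dimension $\le r$. The core lemma I would prove is: for $r + 1$ random $k$-linear combinations $h_j := \sum_i \lambda_{j i} f_i$, with high probability $V(h_1, \ldots, h_{r+1}) = V(f_1, \ldots, f_m)$. Geometrically, $V(h_1, \ldots, h_{r+1}) = F^{-1}(L)$ for a generic codimension-$(r+1)$ linear subspace $L \subseteq \mathbb{A}^m$ through the origin; since $\dim Y \le r$, a Bertini-style dimension count forces $L \cap Y \subseteq \{0\}$, whence $F^{-1}(L) \subseteq F^{-1}(0) = V(f_1, \ldots, f_m)$ and equality follows. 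Proving this lemma, with a quantitative Schwartz--Zippel failure bound for the coefficients $\lambda_{ji}$, is what I expect to be the main technical step.

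\emph{Rabinowitsch and reducing the variables.} Radical membership turns into ordinary Nullstellensatz via Rabinowitsch: $g$ lies in $\sqrt{\langle h_1, \ldots, h_{r+1}\rangle}$ iff $\{h_1, \ldots, h_{r+1},\, 1 - yg\}$ has no common zero in $\bar{k}[x_1, \ldots, x_n, y]$. This new system has $r + 2$ polynomials, $n + 1$ variables, and transcendence degree $r + 1$, so by the fiber-dimension theorem every component of its zero set, when nonempty, has dimension $\ge (n+1) - (r+1) = n - r$. Now apply a random linear change of coordinates to $(x_1, \ldots, x_n, y)$ and substitute random field values for $n - r$ of the new coordinates; equivalently, intersect the variety with a random affine subspace of dimension $r + 1$. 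A Bertini-type argument yields, with high probability, an intersection that is zero-dimensional and nonempty iff the original variety was.

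\emph{Solving the reduced system.} We are left with $r + 2$ blackbox polynomials in $r + 1$ variables of degree $\le d$, whose common zero set is empty or zero-dimensional. Each polynomial has at most $\binom{d + r + 1}{r + 1} = d^{O(r)}$ coefficients, which we recover by multivariate interpolation using $d^{O(r)}$ blackbox queries. On the reconstructed coefficient-form system a standard zero-dimensional elimination routine (for instance a lex Gr\"obner basis or the $u$-resultant) then checks Nullstellensatz consistency in time $\mathrm{poly}(d^{r+1}) = \mathrm{poly}(d^r)$, as required.

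The main obstacle is the first stage: showing that $r + 1$ random linear combinations already cut out exactly the variety of all $m$ input polynomials, and bounding the randomness budget carefully enough to give a polynomial-time randomised algorithm. The remaining steps are essentially standard applications of Rabinowitsch, Bertini, and zero-dimensional elimination theory.
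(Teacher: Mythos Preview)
Your approach is correct and would yield the theorem, but it differs from the paper's proof in one notable way: the paper does \emph{not} reduce the number of polynomials for this theorem. What you identify as ``the main obstacle''---showing that $r+1$ random linear combinations of the $f_i$ already cut out $V(\vf)$---is in fact unnecessary here. The paper proves precisely this generator-reduction lemma separately, but uses it only for the effective-Nullstellensatz degree bound, and explicitly remarks that it buys nothing for the radical-membership algorithm beyond a factor of $m$.

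The paper's route is more direct: apply Rabinowitsch first to reduce to the case $g=1$; use the fibre-dimension theorem to conclude that $V(\vf)$ is either empty or has dimension at least $n-r$; intersect with $n-r$ random affine hyperplanes (your variable-reduction step) to pass to $r$ variables while keeping all $m$ polynomials; interpolate each restricted $f'_i$ in $r$ variables; and finally invoke a zero-dimensional solver whose cost is $\mathrm{poly}(d^{r},m)$. Carrying all $m$ polynomials through costs nothing asymptotically, so your first stage is a detour---valid, and the paper supplies the lemma you would need, but not on the critical path for this theorem.

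One small gap in your third stage: after the variable reduction the residual system need not be zero-dimensional, since the fibre-dimension theorem only gives $\dim \ge n-r$, not equality. A lex Gr\"obner basis or $u$-resultant routine applied blindly to a positive-dimensional input does not obviously stay within $\mathrm{poly}(d^{r})$ time. The paper handles this by intersecting with further random hyperplanes one at a time and testing for dimension exactly $0$ after each cut; you should do the same, or else argue separately that your consistency check tolerates positive-dimensional input within the stated budget.
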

\noindent\textbf{Remarks:}

\textbf{(1)}
The tr.deg $r$ can be much smaller than $n$, and this improves the complexity significantly to $d^{r}$ from the prior $d^{n}$ \cite{lakshmanlazard}.
On the other hand, the usual reduction from SAT to HN results in a set of polynomials with transcendence degree $n$, due to the presence of polynomials $x_{i}^{2} - x_{i}$ (that enforce the binary 0/1 values).

\textbf{(2)}
We also show that the tr.deg itself can be computed in time $d^{r}$, independent of the characteristic (\cref{thm:trdegcompute}).
In the above statement therefore, we can always pick $r = \trdeg{\vf}$, and we can assume that $r$ is not part of the input.

\textbf{(3)}
The transcendence degree is upper bounded by the number of polynomials, and therefore we generalize the case of few polynomials.
It is surprising if one contrasts this case with that of {\em ideal} membership--- where the instance with three polynomials (i.e. transcendence degree = 3) is as {\em hard} as the general instance making it EXPSPACE-complete.
\footnote{
  Suppose $g \in \ideal{f_{1}, \dots, f_{m}}$ is an instance of ideal membership.
  This is equivalent to $z_{1}^{m}z_{2}^{m}g \in \ideal{z_{1}^{m+1}, z_{2}^{m+1}, \sum_{i}f_{i}z_{i}^{i}z_{2}^{m-i}}$.
  Here, $z_{1}, z_{2}$ are fresh variables.
  This reduces the general instance of ideal membership to an instance where the ideal is generated by $3$ elements.
  This transformation is from \cite{ramprasadpc}.
}

\smallskip
Next, we show that taking constant-free random linear combinations preserves the zeroset of the polynomials, if the number of linear combinations is at least one more than the transcendence degree.
This allows us to get bounds on the Nullstellensatz certificates that depend on the transcendence degree.
\begin{theorem}[Effective Nullstellensatz]
  \label{thm:main2}
  Suppose $f_{1}, \dots, f_{m}$ are polynomials in $x_{1}, \dots, x_{n}$, of degrees $d_{1} \geq \cdots \geq d_{m}$ respectively, with an empty zeroset.
  Suppose further that $\trdeg{f_{1}, \dots, f_{m}} = r$.
  
  Then, there exist polynomials $h_{i}$ such that $\deg f_{i} h_{i} \leq \prod_{i=1}^{r+1} d_{i}$ that satisfy $\sum f_{i} h_{i} = 1$.
\end{theorem}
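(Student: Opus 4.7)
The plan is to reduce to a system of $r+1$ polynomials via generic ``lower-triangular'' linear combinations, and then apply Jelonek's effective Nullstellensatz to the reduced system. The triangular shape is what produces the bound $\prod_{i=1}^{r+1} d_i$ depending on the sorted top degrees, rather than the naive $d_1^{r+1}$ that a fully generic combination would yield.

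After relabeling so that $d_1 \geq d_2 \geq \cdots \geq d_m$, choose random scalars $c_{ij}$ for $1 \leq i \leq r+1$ and $i < j \leq m$, and set
\[
F_i \;:=\; f_i + \sum_{j > i} c_{ij}\, f_j, \qquad i = 1, \ldots, r+1.
\]
Since every $f_j$ appearing in $F_i$ has $\deg f_j \leq d_i$, one has $\deg F_i \leq d_i$, with equality for a Zariski-generic choice of the scalars (the degree-$d_i$ parts do not cancel). The key geometric claim is that, for generic $c_{ij}$, the common zeroset $V(F_1, \ldots, F_{r+1})$ equals $V(f_1, \ldots, f_m) = \emptyset$. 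To establish it, consider the morphism $\phi := (f_1, \ldots, f_m) : \mathbb{A}^n \to \mathbb{A}^m$ and let $X := \overline{\phi(\mathbb{A}^n)}$. The coordinate ring of $X$ is $k[f_1, \ldots, f_m]$, whose Krull dimension equals the transcendence degree $r$, so $\dim X = r$. A common zero $x$ of the $F_i$ would force $\phi(x)$ into the linear subspace $L \subseteq \mathbb{A}^m$ cut out by the triangular linear forms $y_i + \sum_{j>i} c_{ij} y_j$, which has codimension $r+1$. A Bertini-type dimension count shows that for generic $c_{ij}$ the intersection $L \cap X$ is contained in the origin; but $0 \notin \phi(\mathbb{A}^n)$ precisely because $V(f_1, \ldots, f_m) = \emptyset$, so $L \cap \phi(\mathbb{A}^n) = \emptyset$ and $V(F_1, \ldots, F_{r+1})$ is empty.

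With emptiness established, apply Jelonek's effective Nullstellensatz to $F_1, \ldots, F_{r+1}$ to obtain $H_i$ with $\sum_i F_i H_i = 1$ and $\deg(F_i H_i) \leq \prod_{i=1}^{r+1} d_i =: B$. Writing $F_i = \sum_j c_{ij} f_j$ (with convention $c_{ii} := 1$ and $c_{ij} := 0$ for $j < i$), swapping the order of summation yields $\sum_j f_j h_j = 1$ with $h_j := \sum_{i \leq r+1} c_{ij} H_i$. For the degree accounting, in each nonzero contribution $c_{ij} f_j H_i$ one has $i \leq j$, hence $d_j \leq d_i$, together with $\deg H_i \leq B - d_i$; therefore $\deg(f_j H_i) \leq d_j + B - d_i \leq B$, yielding $\deg(f_j h_j) \leq B$.

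The main obstacle is the geometric lemma on zeroset preservation. Its subtlety is that the chosen combinations are triangular rather than fully generic: the coefficient vectors occupy only a specific subfamily of the Grassmannian of codimension-$(r+1)$ subspaces, so one must verify that this restricted family is still ``large enough'' for the Bertini-style dimension count on $X$ to go through. The corner case $r+1 > n$ (possible only if $r = n$) must be handled separately, but the claimed bound is then already implied by the classical Jelonek bound in $n$ variables.
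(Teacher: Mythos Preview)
Your proposal is correct and matches the paper's approach: take triangular linear combinations $g_i=\sum_{j\ge i}c_{ij}f_j$ to reduce to $r+1$ polynomials with $\deg g_i=d_i$ while preserving the (empty) zeroset, apply Jelonek's bound, then substitute back. The geometric lemma you flag as the main obstacle is exactly the paper's separate ``generator reduction'' theorem, proved there by iterated hyperplane sections on $\overline{\phi(\mathbb{A}^n)}$ together with Noether normalization to make the fibre over $0$ finite; your final degree accounting (using $i\le j\Rightarrow d_i\ge d_j$ so that $\deg(f_jH_i)\le B$) is in fact spelled out more carefully than the paper's one-line ``substitute back''.
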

\noindent\textbf{Remark:}
The prior best degree-bound for the case of `small' transcendence degree is $\prod_{i=1}^{m} d_{i}$
\cite{jelonek}.
Our bound is significantly better when the transcendence degree $r$ is `smaller' than the number of polynomials $m$.

\smallskip
Finally, as stated before, we show that the transcendence degree of a given system of polynomials can be computed in time polynomial in $d^{r}$ (and $m,n$), where $d$ is the maximum degree of the input polynomials, and $r$ is their transcendence degree.
The algorithm is output-sensitive in the sense that the time-complexity depends on the output number $r$.
\begin{theorem}[Transcendence degree]
  \label{thm:trdegcompute}
  Given as input polynomials $f_{1}, \dots, f_{m}$, in variables $x_{1}, \dots, x_{n}$, of degrees at most $d$, we can compute the transcendence degree $r$ of the polynomials in time polynomial in $d^{r}, n,m$.
\end{theorem}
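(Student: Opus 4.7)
The plan is to reduce \cref{thm:trdegcompute} to $m$ invocations of \cref{thm:main}, via a greedy matroid-rank algorithm where each algebraic-independence test is itself implemented by an HN query (the special case $g=1$ of radical membership).

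\textbf{Independence test via HN.} The first step is to observe that polynomials $h_{1}, \ldots, h_{k} \in k[x_{1}, \ldots, x_{n}]$ are algebraically independent iff the induced polynomial map $\phi = (h_{1}, \ldots, h_{k}) : \mathbb{A}^{n} \to \mathbb{A}^{k}$ is dominant. By Chevalley's theorem this happens iff the image contains a Zariski-dense open subset of $\mathbb{A}^{k}$; equivalently, iff for generic $a = (a_{1}, \ldots, a_{k})$ the fibre $\phi^{-1}(a)$ is non-empty, i.e.\ the system $\{h_{i} - a_{i} = 0\}$ has a common root. This gives a Monte-Carlo independence test: sample $a$ uniformly from a sufficiently large finite subset of $k_{i}$, query whether $1 \in \sqrt{\ideal{h_{1}-a_{1}, \ldots, h_{k}-a_{k}}}$ using \cref{thm:main}, and declare ``independent'' iff the answer is ``no'' (i.e.\ iff the system is solvable). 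Since the $a_{i}$ lie in $k_{i}$, $\trdeg{h_{1}-a_{1}, \ldots, h_{k}-a_{k}} = \trdeg{h_{1}, \ldots, h_{k}} \le k$, so \cref{thm:main} solves this HN instance in time polynomial in $n$ and $d^{k}$.

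\textbf{Greedy computation of the transcendence degree.} Algebraic independence defines a matroid on the inputs, so its rank can be computed greedily. Maintain a running set $S$, initially empty. For $j = 1, \ldots, m$, apply the independence test to $S \cup \{f_{j}\}$; on success, append $f_{j}$ to $S$. Return $|S|$. The matroid exchange property guarantees that the final $|S|$ equals the transcendence degree $r$. At every stage $|S| \le r$, so each test is on at most $r+1$ polynomials of transcendence degree at most $r+1$, and \cref{thm:main} runs in time polynomial in $n$ and $d^{r+1} = d\cdot d^{r}$, hence polynomial in $n$ and $d^{r}$ (the corner case $r=0$, where all inputs are constants, is handled directly). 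Over $m$ iterations the total cost is polynomial in $n, m, d^{r}$, as required. The algorithm is naturally output-sensitive: one never needs to know $r$ a priori, since $|S|+1$ is always a valid upper bound on the transcendence degree of the input to \cref{thm:main}.

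\textbf{Main obstacle.} Making the randomness rigorous is the technical crux. One needs to bound the degree of the ``bad'' subvariety of $\mathbb{A}^{k}$, namely the complement of the dense open subset of $\phi(\mathbb{A}^{n})$, so that Schwartz--Zippel guarantees the sampled $a$ avoids it with high probability. Standard effective elimination / Perron-type bounds imply the relevant image closure (when $\phi$ is not dominant) is cut out by a polynomial of degree at most $d^{O(k)}$, which in the dominant case also yields a bounded-degree defining polynomial for the bad locus via an effective Chevalley argument. One then amplifies per-test success by independent repetition and applies a union bound over the $m$ tests to conclude overall correctness with high probability.
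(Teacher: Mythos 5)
Your route is genuinely different from the paper's, and it is also where the genuine gap lies. You compute the rank greedily (the matroid part is fine) and implement each independence test as an HN query on the shifted system $h_{1}-a_{1},\dots,h_{k}-a_{k}$ for a random point $a$ of the \emph{codomain}. The paper never does this: it samples a point $\va$ in the \emph{domain}, so that the fibre of $\vf(\va)$ is automatically nonempty, and determines $r$ as the codimension of that fibre, computed by cutting with $n-i$ random hyperplanes (\cref{lem:intdim}), interpolating, and testing zero-dimensionality via \cref{thm:ll}; the effective content is \cref{lem:effectivedim}, where the Perron bound is applied to annihilators of the coordinates $x_{j}$, not to any object in the codomain.

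The unproven step is exactly what you call the ``main obstacle'', and it is not routine. In the dependent case your test is sound: the Perron annihilator of $h_{1},\dots,h_{k}$ has degree at most $d^{k}$, vanishes on $\overline{\phi(\bA^{n})}$, and Schwartz--Zippel makes the fibre of a random $a$ empty with high probability. But in the independent (dominant) case you need the random $a$ to lie in the image itself, not merely in its closure (which is all of $\bA^{k}$); the bad set is the boundary $\overline{\phi(\bA^{n})}\setminus\phi(\bA^{n})$, and Perron-type bounds on the image closure say nothing about it. An explicit degree bound for a proper closed set containing this boundary (an effective Chevalley/elimination statement, or a bound on the non-properness set in the style of Jelonek) is precisely the nontrivial effective ingredient your proof needs, and you only assert its existence. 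You also cannot dodge it by sampling $a=\vh(\va)$ at a random domain point: then the system always has the root $\va$, and the test would declare every set independent. So as written, the correctness of the independence test in the dominant case is open; it can plausibly be repaired (e.g.\ first restrict $\phi$ to a random $k$-dimensional affine subspace using \cref{lem:intdim}, then invoke an effective bound on the complement of the image of a dominant map $\bA^{k}\to\bA^{k}$), but that repair is an additional ingredient found neither in your sketch nor in the paper's toolkit, whereas the paper's domain-sampling argument avoids the issue entirely.
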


\noindent\textbf{Remark:}
In the case when the characteristic of the field is greater than $d^{r}$, there is a much more efficient (namely, randomized polynomial time) algorithm using the Jacobian criterion \cite{bms}.
The algorithm presented here is useful when the characteristic is `small'; whereas the previous best known time-complexity was $> d^{r^{2}}$ if one directly implements the PSPACE algorithm. Eg.~for $d=O(1)$ and $r=O(\log n)$ our complexity is polynomial-time unlike the prior known algorithms.

\noindent\textbf{A motivating example} 
 where our results are better than the known results is when the input blackboxes are implicitly of the form $f_{i}(h_{1}, \dots, h_{r})$, $i\in[m]$, for $r\ll n$, where each $h_{i}$ is an $n$-variate polynomial, and $m = n+1$. Here, $f_{i}$'s have transcendence degree $r$. Thus, our algorithms take time $d^{r}$; significantly less than $d^{n}$.

\subsection{Proof ideas}
\textit{Pf.~idea \cref{thm:main}:}
We first use the Rabinowitsch trick to reduce to HN: the case $g = 1$.
Next, we perform a random linear variable-reduction. We show that replacing each $x_{i}$ with a linear combination of $r$ new variables $z_{j}$ preserves the existence of roots.
This is done by using the fact that a general linear hyperplane intersects a variety properly (Lemma \ref{lem:intdim}).
Once we are able to reduce the variables, we can interpolate to get dense representation of our polynomials, and invoke existing results about testing nonemptiness of varieties (\cref{thm:ll}).

\smallskip
\textit{Pf.~idea \cref{thm:main2}:}
For the second theorem, we show that random linear combinations of the input polynomials, as long as we take at least $r+1$ many of them, preserve the zeroset.
For this, we study the image of the polynomial map defined by the polynomials.
We again use the theorem regarding the hyperplane intersection (Lemma \ref{lem:intdim}).
In order to get the degree bounds, we must allow these hyperplanes to depend on fewer variables, and allow their equations to be constant free.
Once this is proved, we can use a bound (\cref{thm:jelonekeffective}) on the Nullstellensatz certificates for the new polynomials (which is better since the polynomials are fewer in number) to obtain a bound for the original polynomials.

\smallskip
\textit{Pf.~idea \cref{thm:trdegcompute}:}
The image of the polynomial map defined by the polynomials is such that the general fibre has codimension equal to the transcendence degree.
We first show that a random point, with coordinates from a subset which is not `too large', satisfies this property.
In order to efficiently compute the dimension of this fibre, we take intersections with hyperplanes; and apply Lemma \ref{lem:intdim} and \cref{thm:ll}.

\section{Notation and preliminaries}

\subsection{Notation}
We reserve $n$ for the number of variables ($x_1,\ldots,x_n$), $m$ for the number of polynomials ($f_{1}, \dots, f_{m}$) in our inputs.
The polynomials have total degrees $d_{1}, \dots, d_{m}$.
We assume that the polynomials are labeled such that $d_{1} \geq d_{2} \geq \cdots \geq d_{m}$.

We use boldface to denote sequence of objects, when the indexing set is clear; for example, $\vx$ denotes $x_{1}, \dots, x_{n}$ and $\vf$ denotes $f_{1}, \dots, f_{m}$.
The point $(0, \dots, 0)$ will be represented by $\vzero$.
We use $k$ to denote the underlying field which we assume is algebraically closed, and $k_{i}$ to denote the field in which the coefficients of the inputs lie.
We use $\bA^{n}$ to denote the $n$ dimensional affine space over $k$.
Given a variety $X$, we use $k\bs{X}$ to denote its coordinate ring, and when $X$ is irreducible we use $k(X)$ to denote its function field.
We use $\bA^{n}$ and $\bP^{n}$ to denote the $n$ dimensional affine and projective spaces respectively, and $\bP_{\infty}^{n}$ to denote the hyperplane at infinity.

\subsection{Algebraic-geometry facts}
We use elementary facts from algebraic-geometry, for which \cite{clo, shafarevich} are good references.
We do not assume that our varieties (or zerosets) are irreducible.
We will use the {\em Noether normalization lemma}.
The following statement is useful, as it characterizes the linear maps which are Noether normalizing.
\begin{theorem}{\em\cite[Thm.1.15]{shafarevich}}
  \label{thm:nn}
  If $X \subseteq \bP^{N}$ is a closed subvariety disjoint from an $\ell$-dimensional linear subspace $E \subseteq \bP^{N}$ then the projection $\pi: X \to \bP^{N-\ell-1}$ with centre $E$ defines a {\em finite} map $X \to \pi(X)$.
\end{theorem}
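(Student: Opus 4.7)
The plan is to reduce the finiteness claim to a concrete integrality statement by a linear change of coordinates, then use the projective Nullstellensatz to extract integral-dependence relations, and finally descend to an affine cover of $\pi(X)$ where finiteness has a purely algebraic meaning. First I would choose homogeneous coordinates on $\bP^{N}$ so that $E$ is cut out by the $N-\ell$ linear forms $x_{0}, x_{1}, \ldots, x_{N-\ell-1}$; the projection with centre $E$ is then the rational map $(x_{0}:\cdots:x_{N}) \mapsto (x_{0}:\cdots:x_{N-\ell-1})$. The hypothesis $X \cap E = \emptyset$ means the first $N-\ell$ coordinates do not vanish simultaneously on any point of $X$, so $\pi$ is regular on $X$. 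Because $X$ is projective, $\pi(X) \subseteq \bP^{N-\ell-1}$ is closed; and since finiteness of a morphism is local on the target, it suffices to verify that $\pi$ is finite over each standard affine chart $\{x_{i}\neq 0\}$ with $i \leq N-\ell-1$.

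The key step is to produce, for each ``excess'' coordinate $x_{j}$ with $j \geq N-\ell$, a monic integral equation. Disjointness $X \cap E = \emptyset$ combined with the projective Nullstellensatz yields
$$(x_{0},\ldots,x_{N})^{k} \;\subseteq\; I(X) + (x_{0},\ldots,x_{N-\ell-1})$$
for some $k \geq 1$. In particular $x_{j}^{k} \in I(X) + (x_{0},\ldots,x_{N-\ell-1})$, which gives a homogeneous $F_{j} \in I(X)$ of the form $x_{j}^{d_{j}} - G_{j}$, where every monomial of $G_{j}$ is divisible by some $x_{s}$ with $s \leq N-\ell-1$. Dehomogenizing on the chart $\{x_{i}\neq 0\}$ converts $F_{j}$ into a monic equation for $y_{j} := x_{j}/x_{i}$ whose coefficients lie in $k[y_{0},\ldots,\widehat{y_{i}},\ldots,y_{N-\ell-1}]$. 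This is precisely an integral dependence of the affine coordinate $y_{j}$ over the pullback of the coordinate ring of $\pi(X) \cap \{x_{i}\neq 0\}$.

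Assembling: the charts $\{x_{i}\neq 0\}$ for $i \leq N-\ell-1$ cover $\pi(X)$ (any point of $\pi(X)$ has at least one such coordinate nonzero), and on each chart every ``excess'' affine coordinate is integral over the image's coordinate ring, so the preimage of each chart is an affine variety whose coordinate ring is a finite module over the chart's coordinate ring. Finiteness being local on the target, the global map $\pi \colon X \to \pi(X)$ is finite. The main obstacle I anticipate is the simultaneous extraction of genuinely \emph{monic} relations for every excess variable from the Nullstellensatz inclusion, and checking that the dehomogenized relations really do exhibit integrality over the subring generated by the downstairs coordinates (as opposed to just an algebraic dependence with non-unit leading coefficient); the remaining verifications are routine bookkeeping once coordinates are fixed.
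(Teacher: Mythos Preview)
The paper does not supply a proof of this statement: it is quoted verbatim as \cite[Thm.~1.15]{shafarevich} and used as a black box (the sentence following the theorem merely explains what ``projection with centre $E$'' means). So there is no ``paper's own proof'' to compare against; the relevant benchmark is Shafarevich's argument, and your outline is essentially that argument.

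Your sketch is correct in its architecture, but one sentence is imprecise in exactly the way you yourself flag at the end. From
\[
x_{j}^{k} \;=\; F_{j} \;+\; \sum_{s=0}^{N-\ell-1} g_{j,s}\,x_{s}, \qquad F_{j}\in I(X),
\]
dehomogenizing at $x_{i}\neq 0$ gives $y_{j}^{k} = G_{j}(y_{0},\ldots,y_{N})$ in the affine coordinate ring of $X\cap\{x_{i}\neq 0\}$, where $G_{j}$ has total degree $<k$ in the \emph{excess} variables $y_{N-\ell},\ldots,y_{N}$ but in general still involves them; its coefficients do \emph{not} lie only in $k[y_{0},\ldots,\widehat{y_{i}},\ldots,y_{N-\ell-1}]$ as you state. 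The fix is the standard one: these relations, one for each $j\ge N-\ell$, show that the monomials $\prod_{j\ge N-\ell} y_{j}^{a_{j}}$ with all $a_{j}<k$ span the affine coordinate ring as a module over the image of $k[y_{0},\ldots,\widehat{y_{i}},\ldots,y_{N-\ell-1}]$, hence module-finiteness. With that adjustment (which you already anticipate as ``the main obstacle''), the proof goes through and coincides with Shafarevich's.
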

Here, by {\em projection with center $E$} we mean that the coordinate functions of the map are the same as a set of defining linear equations for $E$.
By the above theorem, proving that a given map is Noether normalizing for a particular variety reduces to proving that the variety is disjoint from a linear subspace.

We will also use the following two statements from dimension theory, namely the theorem on the dimension of intersections with hypersurfaces, and the theorem on the dimension of fibres.
\begin{theorem}{\em\cite[Thm.1.22]{shafarevich}}
  \label{thm:intersection}
  If a form $F$ is not zero on an irreducible projective variety $X$ then $\dim (X \cap V(F)) = \dim X - 1$.
\end{theorem}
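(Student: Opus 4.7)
The plan is to prove the asserted equality $\dim(X \cap V(F)) = \dim X - 1$ by establishing the two inequalities. For the upper bound, I would note that since $F$ does not vanish on $X$, the set $X \cap V(F)$ is a proper closed subvariety of $X$; as $X$ is irreducible, every proper closed subset has strictly smaller dimension, so each component of $X \cap V(F)$ has dimension at most $\dim X - 1$. The trivial case $\dim X = 0$ is handled separately: there $X$ is a single point outside $V(F)$, the intersection is empty, and by convention $\dim \emptyset = -1 = \dim X - 1$.

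For the lower bound, assume $\dim X \geq 1$. The first step is to show that the intersection is nonempty. This follows from the projective dimension theorem in $\bP^{N}$: two closed subvarieties whose dimensions sum to at least $N$ must meet, and here the hypersurface $V(F)$ contributes $N-1$ while $X$ contributes at least $1$. Next, pick any irreducible component $Y$ of $X \cap V(F)$, take an affine open $U \subseteq \bP^{N}$ meeting $Y$, and set $X_{0} := X \cap U$, an irreducible affine variety of the same dimension as $X$. Dehomogenizing $F$ in this chart yields $f \in k[X_{0}]$, a nonzero non-unit of the Noetherian integral domain $k[X_{0}]$, and $Y \cap U$ is an irreducible component of $V(f) \cap X_{0}$. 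Krull's Hauptidealsatz asserts that every minimal prime of $k[X_{0}]$ containing $f$ has height one; geometrically, every irreducible component of $V(f) \cap X_{0}$ has codimension exactly one in $X_{0}$, so $\dim Y = \dim X - 1$, completing the inequality.

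The main obstacle is the nonemptiness step. It is not purely set-theoretic: one needs a projective hypersurface to meet every positive-dimensional closed subvariety, which rests on the projective form of the dimension theorem (or equivalently on the main theorem of elimination theory, that the image of a projective variety under a linear projection is closed). Krull's Hauptidealsatz is the standard commutative-algebra input, and once both are granted the codimension-one conclusion is immediate.
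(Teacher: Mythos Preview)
The paper does not supply a proof of this theorem: it is quoted in the preliminaries with a citation to \cite[Thm.~1.22]{shafarevich} and used as a black box. There is therefore nothing in the paper to compare your argument against.

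Your outline is sound on its own terms. The upper bound is exactly as you say, and the lower bound via an affine chart together with Krull's Hauptidealsatz is the standard commutative-algebra route (one also uses that coordinate rings of affine varieties over a field are catenary, so height~$1$ translates to codimension~$1$). One caution on the nonemptiness step: invoking the full projective dimension theorem (that closed subvarieties of $\bP^{N}$ whose dimensions sum to at least $N$ must meet) is circular in most developments, since that theorem is typically proved by iterating precisely the statement you are trying to establish. You do flag the alternative via the main theorem of elimination theory --- a positive-dimensional projective variety cannot sit inside the affine open $\bP^{N} \setminus V(F)$, because a complete subvariety of an affine variety is finite --- and that route is independent and legitimate. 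If you want the argument to stand on its own, lean on that alternative and drop the appeal to the general dimension theorem.
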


\begin{theorem}[Fibre dimension]{\em\cite[Thm.1.25]{shafarevich}}
  \label{thm:fibredim}
  Let $f: X \to Y$ be a {\em surjective} regular map between irreducible varieties.
  Then $\dim Y \leq \dim X$, and for every $y \in Y$, the fibre $f^{-1}(y)$ satisfies $\dim f^{-1}(y) \geq \dim X - \dim Y$ (equiv.~{\em codim} $f^{-1}(y)\le\dim Y$).
  
  Further, there is a nonempty open subset $U \subset Y$: for every $y \in U$, $\dim f^{-1}(y)$ $= \dim X - \dim Y$ (equiv.~{\em codim} $f^{-1}(y)=\dim Y$).
\end{theorem}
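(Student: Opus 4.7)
The plan is to extract $r=\trdeg{\vf}$ from the generic-fibre dimension of the polynomial map $F:=(f_1,\ldots,f_m)\colon\bA^n\to\bA^m$, using \cref{thm:main} as an emptiness subroutine. Since $k[\overline{F(\bA^n)}]\cong k[\vf]\subseteq k[\vx]$, the image variety $Y:=\overline{F(\bA^n)}$ has $\dim Y=r$. By the fibre-dimension theorem (\cref{thm:fibredim}), there is a nonempty Zariski open $U\subseteq Y$ with $\dim F^{-1}(y)=n-r$ for every $y\in U$. The first step is thus to sample $p\in\bA^n$ uniformly from a grid $S^n$ whose size is polynomial in $n,m,d^r$, set $y:=F(p)$, and argue via Schwartz--Zippel (against the proper closed locus $\bA^n\setminus F^{-1}(U)$) that $y\in U$ with high probability, so that $\dim F^{-1}(y)=n-r$.

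Next, I compute this dimension by slicing the fibre with generic affine hyperplanes and testing emptiness. Pick a random $\sigma\in\mathrm{GL}_n$ and random constants $c_1,\ldots,c_n\in S$, and for each $s=0,1,2,\ldots$ define
\[
  \tilde f_i^{(s)}(z_1,\ldots,z_s)\;:=\;f_i\!\bigl(\sigma^{-1}(z_1,\ldots,z_s,c_{s+1},\ldots,c_n)\bigr),\qquad i\in[m],
\]
and $\tilde X_s:=V\bigl(\tilde f_i^{(s)}-y_i:i\in[m]\bigr)\subseteq\bA^s$. Geometrically $\tilde X_s$ is $F^{-1}(y)$ intersected with the generic affine subspace $\sigma^{-1}\bigl(\bA^s\times\{(c_{s+1},\ldots,c_n)\}\bigr)$ of codimension $n-s$. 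Iterating \cref{thm:intersection} (via \cref{lem:intdim}) through the one-at-a-time slices shows that, for generic $(\sigma,c)$, $\dim\tilde X_s=s-r$ when $s\ge r$ (so $\tilde X_s\neq\emptyset$), while $\tilde X_s=\emptyset$ when $s<r$. Emptiness of $\tilde X_s$ is decided by one call to \cref{thm:main} with $g=1$; since the $\tilde f_i^{(s)}$ live in only $s$ variables, $s$ is a trivial upper bound on their transcendence degree and the call costs time polynomial in $d^s,n,m$. The algorithm returns the smallest $s$ for which $\tilde X_s$ is declared nonempty: with high probability this equals $r$, and the total running time sums to $\sum_{s=0}^{r}\mathrm{poly}(d^s,n,m)=\mathrm{poly}(d^r,n,m)$.

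The main technical obstacle is the probabilistic analysis: one has to simultaneously exclude (a) the bad choices of $p$ for which $F(p)\notin U$, and (b) the bad choices of $(\sigma,c)$ for which any of the $n-r$ successive generic hyperplane sections fails to drop dimension by one (violating \cref{thm:intersection}/\cref{lem:intdim}). Both bad events are cut out by proper Zariski-closed conditions, and the plan is to bound the degrees of their defining polynomials (using Bezout-style estimates on the degree of $Y$ and of its iterated linear sections, both of which are controlled by $d^r$ since $Y$ has dimension $r$) so that Schwartz--Zippel over a grid of size polynomial in $n,m,d^r$ succeeds with high probability. Two secondary subtleties: $Y$ need not be irreducible, so \cref{thm:fibredim} must be applied componentwise with genericity imposed uniformly across components; and since the grid size depends on the unknown $r$, the randomness has to be refreshed at each iteration using the current candidate $s$ as an adaptive upper bound, which still keeps the total cost polynomial in $d^r$.
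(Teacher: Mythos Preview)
Your proposal does not address the stated theorem. The statement (\cref{thm:fibredim}) is the classical fibre-dimension theorem: a structural result in algebraic geometry asserting that for a surjective regular map $f:X\to Y$ between irreducible varieties, every fibre has dimension at least $\dim X-\dim Y$, with equality on a nonempty open set. What you have written is instead an \emph{algorithm} for computing the transcendence degree of polynomials $f_1,\ldots,f_m$; that is the content of \cref{thm:trdegcompute}, a different result which \emph{invokes} \cref{thm:fibredim} as a black box (you do so yourself in your second sentence). Describing an application of a theorem is not a proof of it.

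The paper's own treatment of \cref{thm:fibredim} (in the appendix) is the standard sketch from Shafarevich, specialized to $X=\bA^N$, $Y=\bA^M$ with a dominant map. For the lower bound, one observes that each fibre $f^{-1}(\vb)$ is cut out by the $M$ equations $f_i-b_i$, so by \cref{thm:intersection} every component has dimension at least $N-M$. For the generic upper bound, one chooses a transcendence basis $x_1,\ldots,x_{N-M}$ of $k(X)$ over $k(Y)$ and minimal annihilators $A_j$ of the remaining $x_j$ over this basis; whenever $\vb$ is such that each $A_j$ remains a nonzero polynomial after specializing $\vy\mapsto\vb$, the restrictions of $x_j$ to the fibre are still algebraic over $x_1,\ldots,x_{N-M}$, forcing $\dim f^{-1}(\vb)\le N-M$. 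None of this argument appears in your write-up.

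If your intent was actually \cref{thm:trdegcompute}, then your outline is in the same spirit as the paper's --- sample a random point, look at the fibre through it, and recover its dimension via successive linear slices and emptiness/zero-dimensionality tests --- but two points need attention. First, you say ``$Y$ need not be irreducible, so \cref{thm:fibredim} must be applied componentwise''; in fact $Y=\overline{F(\bA^n)}$ is always irreducible (\cref{lem:properties}), so this caveat is unnecessary. Second, and more importantly, the heart of the argument is making the genericity in step~(a) \emph{effective}: you propose to ``bound the degrees of their defining polynomials using B\'ezout-style estimates'' controlled by $d^r$, but the bad locus for step~(a) lives in $\bA^n$ and is governed by the degrees of the annihilators $A_j$, not by $\deg Y$. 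The paper handles this via the Perron bound (this is precisely \cref{lem:effectivedim}), and a pure B\'ezout argument on $Y$ does not obviously substitute for it.
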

The above theorem also holds if we replace surjective by {\em dominant}.
Every fibre either is empty (if the point is not in the image) or has the above bound on the dimension.
\ifdefined\APPENDIX
  We sketch a proof of a special case of the above in \cref{appendix:fibredim} since we require an intermediate statement in the proof of \cref{thm:trdegcompute}.
\else
  We sketch a proof of a special case of the above in Appendix A since we require an intermediate statement in the proof of \cref{thm:trdegcompute}.
\fi

We will also require the B\'ezout inequality.
The definition of degree we use is the version more common in computational complexity.
The degree of a variety is the sum of the degrees of all its irreducible components, as opposed to just the components of highest dimension.
For irreducible varieties, the degree is the number of points when intersected with a general linear subspace of complementary dimension.
This definition affords the following version of the {\em B\'ezout inequality} \cite{heintzfastqe}, which holds without any conditions on the type of intersection.
\begin{theorem}[B\'ezout \cite{heintzfastqe}]
  \label{thm:bez}
  Let $X, Y$ be subvarieties of $\bA^{n}$.
  Then $\deg (X \cap Y) \le \deg X \cdot \deg Y$.
\end{theorem}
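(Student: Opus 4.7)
The plan is to reduce first to the case where $X$ and $Y$ are irreducible, and then to translate the intersection problem into one about repeated hyperplane intersections via a diagonal trick. This way the hard content gets isolated into a single ``intersecting with a hyperplane does not increase degree'' lemma.

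For the irreducible reduction, decompose $X = \bigcup_i X_i$ and $Y = \bigcup_j Y_j$ into irreducible components. Then $X \cap Y = \bigcup_{i,j}(X_i \cap Y_j)$, and every irreducible component of $X \cap Y$ is an irreducible component of some $X_i \cap Y_j$. Since the computational degree sums over all components (of all dimensions), one gets $\deg(X \cap Y) \le \sum_{i,j} \deg(X_i \cap Y_j)$, while $\deg X \cdot \deg Y = \sum_{i,j} \deg X_i \cdot \deg Y_j$ by distributivity. So the inequality for all irreducible pairs implies the general case.

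For irreducible $X, Y \subseteq \bA^n$, embed $X \times Y$ into $\bA^{2n}$ with coordinates $(x_1,\dots,x_n, y_1,\dots,y_n)$, and let $\Delta := \{x_k = y_k : 1 \le k \le n\}$ be the diagonal. The projection onto the $x$-coordinates restricts to an isomorphism $(X \times Y) \cap \Delta \cong X \cap Y$, and $\Delta$ is cut out by the $n$ hyperplanes $H_k : x_k - y_k = 0$. Combined with the classical identity $\deg(X \times Y) = \deg X \cdot \deg Y$ (seen by intersecting with a generic linear subspace of complementary dimension and decomposing that subspace along the product), the proof reduces to proving the single-hyperplane bound: for every variety $Z \subseteq \bA^N$ and hyperplane $H$, $\deg(Z \cap H) \le \deg Z$. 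Iterating this bound $n$ times then gives $\deg((X \times Y) \cap \Delta) \le \deg(X \times Y) = \deg X \cdot \deg Y$.

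The main obstacle is the hyperplane bound itself under the ``sum over all components'' definition of degree, which requires careful bookkeeping because the components of $Z$ can have varying dimensions. The approach is to split the components $Z_i$ of $Z$ into those contained in $H$ (which remain unchanged: $Z_i \cap H = Z_i$ and contribute $\deg Z_i$) and those not contained in $H$. For the latter, Krull's principal ideal theorem forces $Z_i \cap H$ to be equidimensional of dimension $\dim Z_i - 1$, and the classical projective B\'ezout theorem (applied to the projective closures and the closure of $H$) yields $\deg(Z_i \cap H) \le \deg Z_i$. Summing over all components gives $\deg(Z \cap H) \le \sum_i \deg Z_i = \deg Z$, completing the argument.
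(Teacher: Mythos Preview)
The paper does not prove this theorem; it is quoted as a preliminary fact and attributed to Heintz. Your outline is in fact essentially Heintz's own argument from that reference: the reduction to irreducible components, the diagonal embedding $X\cap Y\cong (X\times Y)\cap\Delta$ in $\bA^{2n}$, and iteration of a single-hyperplane degree bound.

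The one step that needs more care is the identity $\deg(X\times Y)=\deg X\cdot\deg Y$. Your parenthetical justification---intersect with a generic linear subspace of complementary dimension and ``decompose that subspace along the product''---does not work as stated: a generic affine linear subspace of $\bA^{2n}$ of dimension $2n-\dim X-\dim Y$ is \emph{not} of the form $L_1\times L_2$ with $L_i$ linear in the $i$th factor; such product subspaces form a proper closed subset of the relevant Grassmannian. You can repair this by arguing via the (affine) Hilbert polynomial, which is multiplicative on products of graded rings, or by showing directly that for generic $L_1,L_2$ of complementary dimensions the product $L_1\times L_2$ meets $X\times Y$ transversally in $\deg X\cdot\deg Y$ points and that transversality is enough to read off the degree. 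Heintz treats this product formula as a separate lemma. With that lemma in hand, the rest of your argument---the hyperplane bound via projective closures and the classical B\'ezout bound for a hypersurface section, together with the component bookkeeping---is sound.
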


Following is a recent version of {\em effective Nullstellensatz} \cite{jelonek}.
\begin{theorem}{\em\cite[Thm.1.1]{jelonek}}
  \label{thm:jelonekeffective}
  Let $f_{1}, \dots, f_{m}$ be nonconstant polynomials, from the ring $k\bs{x_{1}, \dots, x_{n}}$ with $k$ algebraically closed, that have no common zeros.
  Assume $\deg f_{i} = d_{i}$ with $d_{1} \geq \cdots \geq d_{m}$, and also $m \leq n$.
  Then, there exist polynomials $h_{i}$ such that $\deg f_{i} h_{i} \leq \prod_{i=1}^{m} d_{i}$ satisfying $\sum f_{i} h_{i} = 1$.
\end{theorem}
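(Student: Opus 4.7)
The plan is to reduce the affine Nullstellensatz to a homogeneous identity whose degree can be controlled via projective intersection theory. Homogenize each $f_i$ with a fresh variable $x_0$ to obtain a form $F_i \in k[x_0,\ldots,x_n]$ of degree $d_i$. The hypothesis that $f_1,\ldots,f_m$ have no common affine zero is equivalent to $V(F_1,\ldots,F_m) \subseteq V(x_0)$ in $\bP^n$. It then suffices to produce a homogeneous identity $x_0^N = \sum_i F_i H_i$ with $N \le \prod_{i=1}^m d_i$; dehomogenizing (set $x_0 = 1$) yields $1 = \sum_i f_i h_i$ with $\deg(f_i h_i) \le N$.

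The central geometric ingredient is to bound $\deg V(F_1,\ldots,F_m) \le \prod_{i=1}^m d_i$. Using the hypothesis $m \le n$, I would inductively build a chain $Y_0 \supseteq Y_1 \supseteq \cdots \supseteq Y_m$ in $\bP^n$, where $Y_0 = \bP^n$ and $Y_j$ is obtained from $Y_{j-1}$ by intersecting with a generic $k_i$-linear combination $G_j$ of $F_1,\ldots,F_j$. Taking the combinations generically ensures that the intersection is proper on every irreducible component, so \cref{thm:intersection} gives $\dim Y_j = \dim Y_{j-1} - 1$ and \cref{thm:bez} gives $\deg Y_j \le \deg Y_{j-1} \cdot d_j$. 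Since $V(F_1,\ldots,F_m) \subseteq Y_m$ and $\deg Y_m \le \prod_{i=1}^m d_i$, the desired degree bound on $V(F_1,\ldots,F_m)$ follows.

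The most delicate step, and the main obstacle, is to upgrade the set-theoretic containment $V(F_1,\ldots,F_m) \subseteq V(x_0)$---which a priori only yields $x_0^N \in \sqrt{\langle F_1,\ldots,F_m \rangle}$ via the projective Nullstellensatz---to actual ideal membership with the exponent bounded by $\prod_{i=1}^m d_i$. Here I would invoke Noether normalization (\cref{thm:nn}): choose a linear subspace $E \subset \bP^n$ of dimension $m-1$ disjoint from $V(F_1,\ldots,F_m)$, so that the projection with centre $E$ makes $V(F_1,\ldots,F_m)$ finite over its image in $\bP^{n-m}$. The finiteness forces an integral dependence relation on $x_0$ whose degree is controlled by $\deg V(F_1,\ldots,F_m)$, and after clearing denominators one extracts $x_0^N$ in the actual ideal. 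The real technical hurdle is controlling multiplicities in this radical-to-ideal passage, since embedded components and non-reducedness can in principle blow up the required exponent beyond $\prod d_i$; Jelonek's proof navigates this by a careful construction of an auxiliary finite birational morphism whose discriminant governs the multiplicities, and following that template appears to be the cleanest route.
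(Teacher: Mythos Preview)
The paper does not prove this statement at all. \Cref{thm:jelonekeffective} is quoted verbatim as a black-box result from Jelonek's paper \cite{jelonek} and is only \emph{used} (in the proof of \cref{thm:main2}) after the authors have reduced the number of polynomials to $r+1$ via \cref{thm:polynomialreduction}. There is therefore no ``paper's own proof'' to compare your proposal against; you are sketching a proof of an imported theorem that the authors deliberately take for granted.

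For what it is worth, your outline is in the right spirit for Jelonek's argument---homogenize, control the projective zero locus via B\'ezout, and then use a finite (Noether-normalizing) projection to convert the radical containment $x_0 \in \sqrt{\langle F_1,\ldots,F_m\rangle}$ into an honest ideal membership with exponent $\prod d_i$. Your candid admission that ``controlling multiplicities in this radical-to-ideal passage'' is the real technical hurdle is accurate: that is exactly where the work lies, and your sketch does not supply it. But since the present paper never attempts this step either, the comparison you were asked to make is vacuous.
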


We will need the following algorithm for checking if a variety has dimension $0$ ($\dim$ is an integer in the range $[-1,n]$).
The statement assumes that the polynomials are given in the monomial (also called {\em dense}) representation.
We only state the part of the theorem that we require.
\ifdefined\APPENDIX
  A discussion is provided in Appendix \ref{appendix-0-dim}.
\else
  A discussion is provided in Appendix B.
\fi
We note that the below theorem itself invokes results from \cite{lazard1981}, section $8$ of which proves that the operations occur in a field extension of degree at most $d^{n}$ of the field $k_{i}$.
\begin{theorem}{\em\cite[Part of Thm.1]{lakshmanlazard}}
  \label{thm:ll}
  Let $f_{1}, \dots, f_{m}$ be polynomials of degree at most $d$ in $n$ variables.
  There exists a randomized algorithm that checks if the dimension of the zeroset of $f_{1}, \dots, f_{m}$ is $0$ or not, in time polynomial in $d^{n}, m$.
  The error-probability is $2^{-d^{n}}$.
\end{theorem}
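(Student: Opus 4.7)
The plan is to reduce zero-dimensionality testing to a bounded-degree linear-algebra problem, exploiting a B\'ezout degree bound. The key algebraic characterization is that $V(\vf) \subseteq \bA^n$ has dimension $\leq 0$ iff the quotient ring $k[\vx]/\langle\vf\rangle$ is a finite-dimensional $k$-algebra, iff the image of each coordinate $x_i$ in $k[V(\vf)]$ is algebraic over $k$, iff the ideal $I = \langle\vf\rangle$ contains a nonzero univariate polynomial in $x_i$ for every $i \in [n]$. By Theorem \ref{thm:bez} we have $\deg V(\vf) \leq d^n$, and when the variety is zero-dimensional the minimal polynomial of $x_i$ modulo $I$ has degree at most $D := d^n$, so it suffices to search among univariate polynomials of degree at most $D$.

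For each $i \in [n]$, decide whether a nonzero $p_i(x_i) \in I$ of degree at most $D$ exists by solving the homogeneous linear system obtained from the ansatz $\sum_{j=0}^{D} c_j x_i^j = \sum_{k=1}^{m} f_k h_k$, with each $h_k$ an unknown polynomial of total degree at most $D - d_k$; equate coefficients of monomials in $\vx$ and ask for a solution with some $c_j \neq 0$. The algorithm returns ``$\dim V(\vf) \leq 0$'' iff a nonzero $p_i$ is found for every $i$. Randomization is used in two places: first, to sample an invertible linear change of coordinates with entries from a set $S \subseteq k_i$ of size $|S| \geq 2^{D+1}$, which by Theorem \ref{thm:nn} and a Schwartz--Zippel argument puts $V(\vf)$ into Noether-general position with failure probability at most $2^{-D} = 2^{-d^n}$; and second, to perform fast rank/nullspace computation on the resulting Macaulay-type matrix.

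The hard part is keeping the size of the linear system polynomial in $d^n$ rather than the naive $d^{O(n^2)}$: the space of multivariate polynomials $h_k$ of total degree at most $D$ has dimension $\binom{n+D}{n}$, which is far larger than $D$. To control this blow-up, one restricts to a carefully chosen complement of a leading-monomial ideal, using the shape of a lex Gr\"obner basis of a generic zero-dimensional ideal as established in Lazard's 1981 work; the structure theorem there bounds the effective rank of the relevant truncated multiplication map by $\mathrm{poly}(D, m)$, which after Noether-generic coordinate change applies to the input. Assembling these ingredients---the algebraic characterization, the B\'ezout degree bound, Noether normalization for generic position, and the Lazard-style bound on the linear-algebra rank---yields a randomized algorithm of time complexity polynomial in $d^n, m$ with one-sided error probability at most $2^{-d^n}$, as required.
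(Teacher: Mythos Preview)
This statement is a cited result from \cite{lakshmanlazard}, and the paper does not prove it but only sketches the underlying method in Appendix~\ref{appendix-0-dim}. That sketch follows a quite different route from yours: it uses Canny's deformation $f_i \mapsto f_i + s\,x_i^{d_i}$ followed by homogenization, then Lazard's algorithm to compute matrices $M_0,\dots,M_n$ whose determinant (in auxiliary variables $u_i$) is a multiple of the U-resultant; zero-dimensionality is then read off from a specialization of this determinant, concretely from the characteristic polynomial of $M_1 M_0^{-1}$. The complexity bound comes from the fact that these matrices act between homogeneous pieces of degree $D = \sum d_i - n$, so have size $d^{O(n)}$.

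Your proposal takes an \emph{a priori} more elementary path---test for a univariate $p_i(x_i)\in I$ of degree $\le d^n$ via a Macaulay-style linear system---but the crucial step is not actually carried out. You correctly identify that the naive linear system has $\binom{n+D}{n}\approx d^{n^2}$ unknowns, and then assert that ``the structure theorem [in Lazard's 1981 work] bounds the effective rank of the relevant truncated multiplication map by $\mathrm{poly}(D,m)$''. But Lazard's 1981 paper is precisely the engine behind the theorem you are trying to establish, so invoking it here is circular: you have reduced the theorem to itself. Without an independent argument bounding the cofactor degrees (or the rank of the Macaulay map) by $\mathrm{poly}(d^n)$, the proposal does not yield the claimed running time. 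A secondary issue: your test decides $\dim V(\vf)\le 0$, not $\dim V(\vf)=0$; the empty variety ($\dim=-1$) also admits univariate witnesses $p_i=1$ in $I$, so as written your algorithm cannot distinguish dimension $0$ from dimension $-1$, which the paper's applications require.
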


We will also require a bound on the degrees of {\em annihilators} of algebraically dependent polynomials.
We refer to this bound as the Perron bound. It also plays a crucial role in the new proofs of effective Nullstellensatz (\cref{thm:jelonekeffective}).

\begin{theorem}[Perron bound]{\em\cite[Cor.5]{bms}}
  Let $f_{1}, \dots, f_{m}$ be algebraically dependent polynomials of degrees $d_{1}, \dots, d_{m}$.
  Then there exists a nonzero polynomial $A(y_{1},\dots, y_{m})$ of degree at most $\prod_{i=1}^{m} d_{i}$ such that $A(f_{1}, \dots, f_{m})$ is identically zero.
\end{theorem}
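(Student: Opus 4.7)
The plan is to realize a nonzero annihilator as a defining equation of the Zariski closure $V := \overline{\phi(\bA^n)} \subset \bA^m$ of the image of $\phi := (f_1,\dots,f_m)$, and bound $\deg V$ by $\prod_i d_i$ via B\'ezout's inequality (\cref{thm:bez}). Algebraic dependence of the $f_i$ gives $\dim V = \trdeg_k k(\vf) < m$, so $V$ is a proper subvariety of $\bA^m$, and any nonzero $A \in I(V)$ (viewed as an element of $k[\vy]$) is an annihilator. A generic linear projection of $V$ onto a hypersurface in $\bA^{\dim V + 1}$ produces such an $A$ of total degree at most $\deg V$, so it suffices to prove $\deg V \leq \prod_i d_i$.

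To bound $\deg V$ I would compactify $\bA^n$ to $\bP^n$ by introducing a homogenizing variable $z$, and set $\tilde g_i(\vx, z, \vy) := f_i^{\mathrm{hom}}(\vx, z) - y_i z^{d_i}$, homogeneous of degree $d_i$ in $(\vx, z)$. Let $\bar\Gamma \subseteq \bP^n \times \bA^m$ be their common zero set. Since $\bP^n$ is complete, $\pi_\vy(\bar\Gamma)$ is closed in $\bA^m$ and contains $\phi(\bA^n)$, hence $V \subseteq \pi_\vy(\bar\Gamma)$. After a generic linear change of $\vx$-coordinates (putting the top-degree forms $(f_i)_{\mathrm{top}}$ in general position so that they have no common projective zero in $\bP^{n-1}$), the boundary $\bar\Gamma \cap \{z = 0\}$ is empty and $\pi_\vy(\bar\Gamma) = V$. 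To compute $\deg V$, intersect $V$ with a generic linear subspace of $\bA^m$ of complementary dimension; pulling back to $\bar\Gamma$ pins the $\vy$-coordinates by linear forms, and substituting these into $\tilde g_i$ produces $m$ hypersurfaces in $\bP^n$ of degrees $d_1,\dots,d_m$, whose intersection cardinality is bounded by $\prod_i d_i$ via \cref{thm:bez}. This yields $\deg V \leq \prod_i d_i$, as required.

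The main obstacle is controlling the boundary of $\bar\Gamma$ at $z = 0$: if the top-degree forms $(f_i)_{\mathrm{top}}$ share a projective zero in $\bP^{n-1}$, the corresponding fibre of $\pi_\vy$ is all of $\bA^m$, which would force $\pi_\vy(\bar\Gamma) = \bA^m$ and ruin the bound. I would handle this by the generic-position change of $\vx$-coordinates mentioned above, justified by a standard Bertini-type argument which simultaneously ensures transversality of the linear section used in the degree count. An alternative route, closer to Perron's original proof and avoiding compactification, is iterated resultants: successively eliminate $x_1, \dots, x_n$ from the system $\{y_i - f_i(\vx)\}$ via resultants, with degree bounds multiplying at each step, to produce a nonzero element of $k[\vy]$ of degree at most $\prod_i d_i$; the parallel obstacle there, of keeping intermediate resultants nonzero, is dispatched by a similar generic coordinate change.
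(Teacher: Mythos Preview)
There is a genuine gap in your bound on $\deg V$. You assert that a generic linear change of $\vx$-coordinates makes the top forms $(f_i)_{\mathrm{top}}$ have no common zero in $\bP^{n-1}$. That is false: a linear automorphism of the $x$-variables acts on the common zero locus of the $(f_i)_{\mathrm{top}}$ by the same linear automorphism of $\bP^{n-1}$, so it moves the locus but can never empty it. Worse, when $m<n$ the projective dimension theorem forces $\dim\bigcap_i V\bigl((f_i)_{\mathrm{top}}\bigr)\ge (n-1)-m\ge 0$, so the locus is \emph{necessarily} nonempty; then $\bar\Gamma\cap\{z=0\}$ equals that locus times $\bA^m$ and hence $\pi_{\vy}(\bar\Gamma)=\bA^m$, killing the argument. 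No Bertini-type statement repairs this, since Bertini controls generic hyperplane sections, not the emptiness of a fixed base locus.

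Your overall plan is salvageable, but by dropping the compactification. Work directly with the affine graph $\Gamma=\{(\vx,\vy):y_i=f_i(\vx)\}\subset\bA^{n+m}$: it is cut out by $m$ hypersurfaces of degrees $d_1,\dots,d_m$, so iterated B\'ezout (\cref{thm:bez}) gives $\deg\Gamma\le\prod_i d_i$; since $\Gamma$ is irreducible and $\pi_{\vy}$ is a linear projection, pulling a generic complementary linear section of $V$ back to $\Gamma$ yields $\deg V\le\deg\Gamma$. Your hypersurface projection then furnishes the annihilator. This is a genuinely different route from the one the paper points to: the paper does not prove the bound itself but cites \cite{bms}, remarking that the method there---a faithful linear variable-reduction followed by the bound of \cite{ploski}---already gives $\prod_i d_i$, and in fact bounds the \emph{weighted} degree of $A$ (weight $d_i$ on $y_i$). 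That weighted refinement, which the paper actually uses in the proof of \cref{lem:effectivedim}, does not fall out of your degree-of-image argument.
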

We note that the theorem statement in \cite{bms} has the bound as $(\max d_{i})^{m}$, however their method of constructing a linear faithful homomorphism and then applying the bound from \cite{ploski} actually gives the above mentioned bound (even for the {\em weighted}-degree of $A$).

In the course of our proof, we will study the image of the polynomial map whose coordinate functions are $f_{1}, \dots, f_{m}$.
We list some properties of this image.
\begin{lemma}[Polynomial map]
  \label{lem:properties}
  Let $f_{1}, \dots, f_{m}$ be polynomials of degrees at most $d$, in variables $x_{1}, \dots, x_{n}$.
  Set $r:= \trdeg{f_{1}, \dots, f_{m}}$.
  Let $F: \bA^{n} \to \bA^{m}$ be a polynomial map defined as

  \begin{center}
    $F(a_{1}, \dots, a_{n}) = (f_{1}(a_{1}, \dots, a_{n}), \dotsc, f_{m}(a_{1}, \dots, a_{n}))$.
  \end{center}

  Let $Y$ be the (Zariski) closure of the image of $\bA^{n}$ under $F$, that is $Y := \overline{F(\bA^{n})}$.
  Then,
  \begin{enumerate}[noitemsep]
    \item $Y$ is irreducible.
    \item $\dim Y = r$.
    \item $\deg Y \leq d^{r}$.
  \end{enumerate}
\end{lemma}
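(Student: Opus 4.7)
\textbf{Proof plan for \cref{lem:properties}.}
The three claims will be handled in order; only the degree bound (3) requires real work. The common tool is the pullback $F^{*} : k[y_{1}, \ldots, y_{m}] \to k[x_{1}, \ldots, x_{n}]$, sending $y_{i} \mapsto f_{i}$, whose kernel is precisely $I(Y)$ by the definition of $Y$ as the Zariski closure of $F(\bA^{n})$.

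For (1), I use that $\bA^{n}$ is irreducible. Were $Y = Y_{1} \cup Y_{2}$ a decomposition into proper closed subvarieties, continuity of $F$ would yield $\bA^{n} = F^{-1}(Y_{1}) \cup F^{-1}(Y_{2})$, forcing one preimage, say $F^{-1}(Y_{1})$, to equal $\bA^{n}$; then $Y = \overline{F(\bA^{n})} \subseteq Y_{1}$ contradicts properness of $Y_{1}$. For (2), $F^{*}$ identifies $k[Y]$ with the subring $k[f_{1}, \ldots, f_{m}] \subseteq k[x_{1}, \ldots, x_{n}]$ and hence $k(Y) \cong k(f_{1}, \ldots, f_{m})$. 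Because $Y$ is irreducible by (1), $\dim Y$ equals the transcendence degree of $k(Y)$ over $k$, which is $r$ by definition.

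For (3), I intersect $Y$ with $r$ generic affine hyperplanes $H_{1}, \ldots, H_{r} \subseteq \bA^{m}$, defined by linear forms $\ell_{i}(y_{1}, \ldots, y_{m})$. Iterating \cref{thm:intersection} on the irreducible $r$-dimensional $Y$, the intersection $Y \cap H_{1} \cap \cdots \cap H_{r}$ is a reduced $0$-dimensional set of exactly $\deg Y$ points. Its $F$-pullback is
\[
V \;:=\; F^{-1}(Y \cap H_{1} \cap \cdots \cap H_{r}) \;=\; V(\ell_{1}(\vf), \ldots, \ell_{r}(\vf)) \;\subseteq\; \bA^{n},
\]
cut out by $r$ polynomials of degree at most $d$, so $\deg V \leq d^{r}$ by iterated B\'ezout (\cref{thm:bez}). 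Since $F(\bA^{n})$ is a constructible set dense in $Y$, it contains a dense open $U \subseteq Y$; choosing the $H_{i}$ generically forces all $\deg Y$ points of $Y \cap \bigcap_{i} H_{i}$ into $U$, so each has a nonempty closed $F$-fibre. These fibres are pairwise disjoint closed subsets of $V$, hence each contains at least one irreducible component of $V$. The number of irreducible components of $V$ is therefore at least $\deg Y$ and at most $\deg V$ (each component contributing $\geq 1$ to $\deg V$), giving $\deg Y \leq \deg V \leq d^{r}$.

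The only delicate step is the simultaneous-genericity argument in (3): the tuple $(H_{1}, \ldots, H_{r})$ must meet $Y$ in exactly $\deg Y$ reduced points \emph{and} send none of them into the strictly lower-dimensional boundary $Y \setminus F(\bA^{n})$. Each condition excludes only a proper subvariety of the parameter space of hyperplane tuples, so a generic choice satisfies both simultaneously. Everything else reduces to standard dimension theory and the B\'ezout inequality.
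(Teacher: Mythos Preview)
Your treatment of (1) and (2) coincides with the paper's proof: irreducibility via the continuous image of $\bA^{n}$, and dimension via the identification $k[Y]\cong k[f_{1},\dots,f_{m}]$ together with the function-field characterization of dimension for irreducible varieties.

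For (3) the paper does not argue at all---it simply defers to \cite[8.48]{actbook}---whereas you supply a self-contained proof: pull back $r$ generic hyperplanes from $\bA^{m}$, observe that the resulting hypersurfaces $V(\ell_{i}(\vf))$ in $\bA^{n}$ have degree at most $d$, bound $\deg V$ by $d^{r}$ via iterated B\'ezout (\cref{thm:bez}), and then count irreducible components of $V$ by noting that the $\deg Y$ disjoint nonempty fibres each swallow at least one component. This is a correct and rather clean route, and it has the advantage of using only machinery already set up in the paper (B\'ezout and Chevalley-type constructibility of the image). One cosmetic remark: your invocation of \cref{thm:intersection} to conclude that the generic affine section of $Y$ has exactly $\deg Y$ points is slightly off---that theorem is stated projectively and only controls the \emph{dimension} of a hypersurface section. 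The claim you actually need (finitely many points, numbering $\deg Y$) is precisely the paper's working definition of degree for an irreducible variety, so no theorem is required there.
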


\begin{proof}[Proof of Lemma \ref{lem:properties}]
  The first statement is a consequence of the fact that $Y$ is the image of an irreducible set (namely $\bA^{n}$) under a continuous map.
  Since $k\bs{Y} = k\bs{f_{1}, \dots, f_{m}}$, we have $\trdeg{k\br{Y}} = r$, whence $\dim Y = r$ by definition.
  Here we used the fact that the dimension of an irreducible variety is the transcendence degree of its function field over the ground field.
  A proof of the third part can be found in \cite[8.48]{actbook}.
\end{proof}

\section{Main Results}

We require a bound on the probability that a random linear hyperplane intersects a variety of a given dimension properly, that is such that the dimension of the variety decreases by exactly one.
It is well known that the set of such hyperplanes form a Zariski open set in the space of all hyperplanes.
We use an explicit bound on the probability of such an intersection based on the degree of the variety, both for the projective and the affine case.
We will require that our intersecting hyperplanes have some structure: that their defining equations depend only on a few variables, depending on the dimension of the variety to be intersected.
We establish all these facts in the next subsection.
In the three subsections following that, we use this lemma to prove our three main results-- \cref{thm:main}, \cref{thm:main2}, and \cref{thm:trdegcompute}.

\subsection{Intersection by a hyperplane}

\begin{lemma}
  \label{lem:intdim}
  Let $V \subseteq \bP^{n}$ be a projective variety of dimension $r$ and degree $D$.
  Let $S$ be a finite subset, of the underlying field $k$, not containing $0$.
  Let $\ell$ be a linear form in $x_{0}, x_{1}, \dots, x_{n-r}$ with each coefficient picked uniformly and independently from $S$.
  Let $H$ be the hyperplane defined by $\ell$.
  Then, with probability at least $1 - D/\abs{S}$ we have $\dim V \cap H = \dim V - 1$.

  Analogously, if $V\subseteq \bA^{n}$ is affine, $\ell$ is a linear polynomial in $x_{1}, \dots, x_{n-r+1}$ and $H$ its hyperplane; then $\dim V \cap H = \dim V - 1$ with probability at least $1 - 2D/\abs{S}$.
\end{lemma}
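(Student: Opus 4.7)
Plan: The argument combines Schwartz--Zippel-style counting on the coefficients of $\ell$ with two geometric reductions: (i) decomposing $V$ into irreducible components, and (ii) for the affine case, passing to the projective closure in order to control intersections that might run off to infinity.

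For the projective part, I write $V = \bigcup_j V_j$ as a union of irreducible components. By \cref{thm:intersection}, for any irreducible $V_j$ of dimension $r$ not contained in $H$ we have $\dim(V_j\cap H) = r-1$; components of strictly smaller dimension can only contribute $\le r-1$. Hence $\dim(V\cap H)=r-1$ fails only if some top-dimensional $V_j$ is contained in $H$, and I bound this by a union bound over top-dimensional components. For a fixed such $V_j$, the coefficient vectors $(a_0,\ldots,a_{n-r}) \in k^{n-r+1}$ making $\ell=\sum_i a_i x_i$ vanish on $V_j$ form a linear subspace $B_j$. I argue $B_j$ is \emph{proper}: if $B_j=k^{n-r+1}$ then every coordinate $x_0,\ldots,x_{n-r}$ would vanish on $V_j$, forcing $V_j\subseteq V(x_0,\ldots,x_{n-r})$, a linear space of dimension $r-1<r$, contradicting $\dim V_j=r$. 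A Schwartz--Zippel count on any single defining linear equation of $B_j$ then gives $|B_j\cap S^{n-r+1}|\le |S|^{n-r}$, i.e.\ $\Pr[V_j\subseteq H]\le 1/|S|$. Summing over the at most $D$ top-dimensional components yields the projective bound $D/|S|$.

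For the affine case I pass to projective closures: $V^h\subseteq\bP^n$ has dimension $r$ and degree $\le D$, and the homogenization $\ell^h = a_0 x_0 + a_1 x_1 + \cdots + a_{n-r+1} x_{n-r+1}$ defines $H^h$. The condition $\dim(V\cap H)=r-1$ fails only under one of two events: (a) some top-dimensional component of $V^h$ is contained in $H^h$; or (b) some top-dimensional $V_j$ has $V_j\cap H=\emptyset$, equivalently $V_j^h\cap H^h\subseteq H_\infty=V(x_0)$. Event (a) is handled exactly as in the projective case --- the extra variable $x_{n-r+1}$ is harmless since the forbidden locus $V(x_0,\ldots,x_{n-r+1})$ still has dimension $r-2<r$ --- and contributes $\le D/|S|$. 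For event (b) I work inside $H_\infty\cong\bP^{n-1}$: set $W_j := V_j^h\cap H_\infty$, a projective variety of dimension $r-1$ and degree $\le\deg V_j$ by \cref{thm:bez}. If event (a) fails then $\dim(V_j^h\cap H^h)=r-1$, and $V_j^h\cap H^h\subseteq H_\infty$ forces $H^h$ to contain a top-dimensional component of $W_j$; here $H^h\cap H_\infty$ is the hyperplane in $H_\infty$ cut out by $a_1 x_1+\cdots+a_{n-r+1}x_{n-r+1}=0$, a linear form in precisely $(n-1)-(r-1)+1=n-r+1$ coordinates of $\bP^{n-1}$, which is exactly the hypothesis of the projective case already proved. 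Applying that case bounds event (b) by $\sum_j\deg W_j/|S|\le D/|S|$, so the two contributions sum to $2D/|S|$.

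\textbf{Main obstacle.} The delicate step is the affine-case event (b): recognizing that ``$V\cap H$ is empty'' translates projectively to ``$V^h\cap H^h$ lies in the hyperplane at infinity'', and then verifying that the induced cutting hyperplane on $H_\infty\cong\bP^{n-1}$ uses exactly the right number $n-r+1$ of coordinates so that the projective case of the lemma can be reused verbatim. This careful coordinate bookkeeping across the projective/affine reduction is what forces --- and justifies --- the factor of $2$ in the affine bound.
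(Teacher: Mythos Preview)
Your proposal is correct and follows essentially the same route as the paper: decompose into top-dimensional irreducible components, use that no such component can lie in $V(x_0,\ldots,x_{n-r})$ to bound containment-in-$H$ by $1/|S|$ per component, and for the affine case pass to the projective closure and reapply the projective statement to the part at infinity. The only cosmetic differences are that the paper picks a single witness point $p_j\in V_j$ (with some nonzero coordinate among $x_0,\ldots,x_{n-r}$) in place of your bad subspace $B_j$, and in the affine case applies the projective bound once to $V^p\cap\bP^n_\infty$ as a whole rather than componentwise to your $W_j$.
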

\begin{proof}[Proof of Lemma \ref{lem:intdim}]
  First we prove the projective case.
  Let $\ell := c_{0}x_{0} + \cdots + c_{n-r} x_{n-r}$, where the $c_{i}$ are the coefficients picked uniformly at random from $S$.
  Let $\cup_{j=1}^{d} V_{j}$ be the decomposition of the dimension-$r$ part of $V$ into irreducible components.
  Then by definition, $\deg V \geq \sum \deg V_{j}$, and hence $d \leq D$.
  Pick a point $p_{j}$ in $V_{j}$, for each $j$.
  We can always pick $p_{j}$ so that not all of its first $n-r+1$ coordinates are zero: if this was not possible then $V_{j}$ would have to be contained in the variety defined by $x_{0} = x_{1} = \cdots = x_{n-r} = 0$, which has dimension only $r-1$.
  By \cref{thm:intersection}, $\dim H \cap \dim V_{j} = \dim V_{j}$ if and only if $V_{j} \subseteq H$ (since $V_{j}$ and $H$ are irreducible),  and otherwise $\dim H \cap V_{j} = \dim V_{j} - 1$.
  The probability that this happens is upper bounded by the probability that $p_{j} \in H$.
  For a fixed $j$, this is equivalent to $\ell(p_{j}) = 0$.
  Since not all of the first $n-r+1$ coordinates of $p_{j}$ are zero, the above is bounded by $1/\abs{S}$, by fixing all but one of the coordinates.
  By a union bound, with probability at most $d/\abs{S}$, there exists some $j$ where $\dim H \cap V_{j} = \dim V_{j}$.
  Therefore, with probability at least $1-D/\abs{S}$, we get $\dim V_{j} \cap H = \dim V_{j} - 1$ for every $j$, whence $\dim V \cap H = \dim V - 1$.

  Now suppose $V$ is affine.
  The difference from the projective case is that the intersection $V \cap H$ might be empty, and we need to bound the probability of this event.
  Let $V^{p}$ be its projective closure.
  Then $\dim V^{p} = \dim V$ and $\deg V^{p} = \deg V$.
  By the previous part, we have $\dim V^{p} \cap H^{p} = \dim V^{p} - 1$ with probability $1 - D/\abs{S}$.
  Then, the case $V \cap H = \emptyset$ only happens if $\dim V^{p} \cap H^{p} \cap \bP_{\infty}^{n} = \dim V^{p} - 1$, where $\bP_{\infty}^{n}$ is the hyperplane $x_{0} = 0$ in $\bP^{n}$.
  The irreducible components of $V$ are in bijection with those of $V^{p}$, and hence $V^{p}$ has no irreducible component contained in $\bP^{n}_{\infty}$.
  Therefore, $\dim V^{p} \cap \bP^{n}_{\infty} = \dim V^{p} - 1$.
  Further, by B\'ezout's theorem we have $\deg V^{p} \cap \bP^{n}_{\infty} \leq \deg V^{p}$.
  
  Now $H^{p} \cap \bP^{n}_{\infty}$ is a hyperplane in $\bP^{n}_{\infty}$ defined by the nonconstant part of $\ell$.
  In particular, it is a hyperplane whose defining equation has coefficients picked uniformly and independently and we can apply the projective version of this lemma on $\bP^{n}_{\infty}$.
  Therefore the probability that its intersection with $V^{p} \cap \bP^{n}_{\infty}$ does not result in a reduction in the dimension is at most $D/\abs{S}$.
  By a union bound, with probability at least $1-2D/\abs{S}$ it holds that $\dim V^{p} \cap H^{p} = \dim V - 1$ and $\dim V^{p} \cap H^{p} \cap \bP^{n}_{\infty} = \dim V - 2$, whence $\dim V \cap H = \dim V - 1$ as required.
\end{proof}
An important fact to note is that our choice of variables for the linear form is arbitrary.
The lemma works for any choice of $n-r+1$ variables, and this will be important when we use the lemma.
Also, note that the above lemma works when our linear form involves more that $n-r+1$ variables.

Repeated applications of the above allow us: (1) to reduce a variety to dimension $0$ by taking hyperplane sections, and (2) to find a linear subspace that avoids the variety.

\subsection{Radical membership: Proof of {\cref{thm:main}}}\label{sec-rad-memb}
Using the above lemma, we complete the proof of the main theorem:
%
\begin{proof}[Proof of \cref{thm:main}]
  We first assume $g = 1$, which is the Nullstellensatz problem HN.
  Define $D:= \prod_{i=1}^{m} d_{i}$, and $V := V(\ideal{\vf})$.
  The set of common zeroes of these polynomials is the fibre of the point $\vzero$ under the map $F$ defined in \cref{lem:properties}.
  The problem HN is thus equivalent to testing if a particular fibre of a polynomial map is nonempty.
  By the fibre dimension theorem (\cref{thm:fibredim}), the codimension of the zeroset---if it is nonempty---is bounded above by the dimension of the image of the map, which by \cref{lem:properties} is $r$.
  The zeroset $V$ is therefore either empty, or has dimension at least $n-r$.
  Assume that $V$ is nonempty.
  By repeated applications of B\'ezout's theorem (\cref{thm:bez}), $\deg V \leq D$.
  Let $S$ be a subset of the underlying field $k_{i}$ (or an extension) of size at least $6(n-r)D$ that does not contain $0$.
  We can sample from $S$ in time polynomial in $d,n,m$, since $S$ has size exponential in these parameters.
  Further, if we were required to go to an extension to form $S$, the degree of the extension would be polynomial in $d, n, m$.
  Pick $n-r$ random linear polynomials $\ell_{1}, \dots, \ell_{n-r}$ with coefficients from $S$, and call their zero sets $H_{1}, \dots, H_{n-r}$ respectively.
  By Lemma \ref{lem:intdim}, the intersection $V \cap H_{1}$ has dimension $r-1$ with probability at least $1-1/(3(n-r))$.
  Further, by Bézout's theorem we get $\deg V \cap H_{1} \leq \deg V \leq D$, since each $H_{i}$ has degree one.
  Again by Lemma \ref{lem:intdim}, the intersection $(V \cap H_{1}) \cap H_{2}$ has dimension $r-2$ with probability at least $1-1/(3(n-r))$, and $\deg V \cap H_{1} \cap H_{2} \leq D$.
  Repeating this for all $H_{i}$ and using the union bound, we get $\dim V \cap H_{1} \cap \cdots \cap H_{n-r} \geq 0$ with probability at least $2/3$.

  Therefore, when the polynomials $\vf$ have nonempty zeroset and are restricted to the $r$ dimensional affine subspace $\cap H_{i}$, the new zeroset has dimension at least $0$, and in particular is nonempty.
  If the zeroset of the polynomials was empty to begin with, then the restriction to the linear subspace also results in an empty zeroset.

  This restriction can be performed by a variable reduction, as follows.
  Treating $\bA^{n}$ as a vector space of dimension $n$ over $k$, let $H_{0}$ be the linear subspace corresponding to the affine subspace $H:=\cap H_{i}$.
  $H_0$ has dimension $r$, and hence has basis $a_{1}, \dots, a_{r}$.
  Further, let vector $b$ be such that $H = H_{0} + b$.
  Define linear forms $c_{1}, \dots, c_{n}$ in new variables $z_{1}, \dots, z_{r}$ as $c_{i} := \sum_{j=1}^{r} a_{ji} z_{j} + b_i$, where $a_{ji}$ is the $i^{th}$ component of $a_{j}$.
  Define $f'_{i} := f_{i}(c_{1}, \dots, c_{n})$.
  Then by construction, the zeroset of $f'_{1}, \dots, f'_{m}$ is equal to $V \cap (\cap H_{i})$.
  Further, $\deg f'_{i} = \deg f_{i}$, and these polynomials are in $r$ variables.
  Also, the construction of these $f'_{i}$ can be done in a blackbox manner, given blackboxes for $f_{i}$.
  This construction takes time polynomial in $m, r, n$.

  We now repeatedly invoke \cref{thm:ll} to check if $f'_{i}$s have a common root.
  First we must convert them to a sparse representation.
  The polynomial $f'_{i}$ has at most $\binom{r+d_{i}}{r}$ many monomials, and therefore we can find every coefficient in time polynomial in $\binom{r+d_{i}}{r}$ by simply solving a linear system.
  Applying \cref{thm:ll}, we can test whether the dimension of the zeroset of $f'_{1}, \dots, f'_{m}$ is $0$ or not.
  However, we want to check if the dimension is at least $0$.
  For this, we randomly sample $r$ more hyperplanes $H'_{1}, \dotsc, H'_{r}$ as in the previous part of the proof, this time in the new variables $z_{1}, \dots, z_{r}$.
  Let $V'$ be the zeroset of $f'_{1}, \dots, f'_{m}$.
  We first use \cref{thm:ll} to check if $V'$ has dimension $0$.
  If not, then we check if $V' \cap H'_{1}$ has dimension $0$.
  If not, then we check $V' \cap H'_{1} \cap H'_{2}$, and so on.
  We return success if any one of the above iterations returns success (implying that the corresponding variety has dimension $0$).
  Performing calculations similar to the ones earlier in the proof, we see that with high probability each intersection reduces the dimension by $1$.
  If $V'$ originally had dimension $r'$, then after intersecting with $r'$ hyperplanes, the algorithm of \cref{thm:ll} returns success.
  If $V'$ was empty, then the algorithm does not return success in any of the above iterations.
  This allows us to decide if $V'$ has dimension at least $0$.
  Finally, using the fact that the dimension of the zeroset of $f'_{1}, \dots, f'_{m}$ is at least $0$ if and only if $\dim V\ge0$, we get the required algorithm for HN.

  We now estimate the time taken.
  Computing the dense representation takes time polynomial in $d^{r}$ and $m$.
  Each of the at most $r$ applications of \cref{thm:ll} also take the same amount of time.
  The sampling steps take time polynomial in $\log nD$ (in turn polynomial in $d,m$) and only requires an extension of degree polynomial in $n$ and $\log{d}$.
  The total time taken is therefore polynomial in $m,d^{r}$.

  Now assume that $g$ is an arbitrary polynomial.
  We reduce the problem to the case of $g=1$ using Rabinowitsch trick \cite{rabinowitsch1930hilbertschen}.
  The polynomial $g$ belongs to the radical of the ideal $\ideal{\vf}$ if and only if the polynomials $\vf, 1-yg$ have no common root (here $y$ is a new variable).
  Further, if $\vf$ have transcendence degree $r$, then the set $\vf, 1-yg$ has transcendence degree $r+1$.
  We therefore reduce the radical membership problem to HN problem, with a constant increase in the transcendence degree, number of polynomials and the number of variables.
  By the result in the previous paragraph, we can solve this in time polynomial in $n, m$ and $d^{r}$.
\end{proof}

\subsection{Effective Nullstellensatz: Proof of {\cref{thm:main2}}}
We now prove that by taking random linear combinations of the input polynomials, we can reduce the number of polynomials to be one more than the transcendence degree while preserving the existence of roots.
This reduction gives degree bounds for the Nullstellensatz certificates.
Note that this reduction does not help in Section \ref{sec-rad-memb}'s root-testing procedure, since we will only be saving a factor in $m$ if we reduce the number of polynomials.
\begin{theorem}[Generator reduction]
  \label{thm:polynomialreduction}
  Let $f_{1}, \dots, f_{m}$ be polynomials, in $x_{1}, \dots, x_{n}$, of degrees atmost $d$ and of transcendence degree $r$.
  Let $g_{1}, \dots, g_{r+1}$ be polynomials defined as $g_{i} := \sum_{j=i}^{m} c_{ij} f_{j}$, where each $c_{ij}$ is randomly picked from a finite subset $S$ of $k$.
  Then with probability at least $1-d^{(r+1)m}/\abs{S}$, we have $V(\ideal{\vf}) = V(\ideal{\vg})$.
\end{theorem}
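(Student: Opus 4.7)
The containment $V(\ideal{\vf}) \subseteq V(\ideal{\vg})$ is automatic (each $g_i \in \ideal{\vf}$), so the task reduces to bounding the probability that a point in $V(\ideal{\vg}) \setminus V(\ideal{\vf})$ exists. I will translate this into a hyperplane-section problem on the image variety of the polynomial map defined by $\vf$, and then apply \cref{lem:intdim} iteratively.

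Let $F: \bA^n \to \bA^m$ be the polynomial map with components $\vf$, and set $Y := \overline{F(\bA^n)}$; by \cref{lem:properties}, $Y$ is irreducible of dimension $r$ and degree at most $d^r$. Writing $G_i(y_1, \dots, y_m) := \sum_{j=i}^m c_{ij}\, y_j$ for the linear form on $\bA^m$ such that $g_i = G_i \circ F$, we have $V(\ideal{\vg}) = F^{-1}(E)$, where $E := V(G_1, \dots, G_{r+1}) \subset \bA^m$ is a linear subspace through the origin. The desired equality of zerosets thus reduces to the geometric statement $Y \cap E \subseteq \{\vzero\}$. Since each $G_i$ is constant-free, I pass to the projection from the origin: let $\pi: \bA^m \setminus \{\vzero\} \to \bP^{m-1}$, set $Y' := \overline{\pi(Y \setminus \{\vzero\})}$, and let $H_i := V(G_i) \subset \bP^{m-1}$. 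Then $\dim Y' \leq r$ and $\deg Y' \leq d^r$ (projection from a point is degree-nonincreasing), and any $y \in (Y \cap E) \setminus \{\vzero\}$ produces $\pi(y) \in Y' \cap H_1 \cap \cdots \cap H_{r+1}$. So it suffices to show this projective intersection is empty with high probability.

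The main step is to apply \cref{lem:intdim} iteratively for $i = 1, 2, \dots, r+1$. Here the triangular structure of the $G_i$ is essential: because $G_i$ depends only on $y_i, \dots, y_m$ and the diagonal coefficients $c_{ii}$ are nonzero (we may assume $0 \notin S$), at step $i$ we triangularly eliminate $y_1, \dots, y_{i-1}$ using $G_1, \dots, G_{i-1}$ and work in the ambient $H_1 \cap \cdots \cap H_{i-1} \cong \bP^{m-i}$ with coordinates $[y_i : \cdots : y_m]$. In this ambient $G_i$ involves all $m - i + 1$ coordinates, which exceeds the $m - r$ variables required by \cref{lem:intdim} precisely when $i \leq r+1$--- exactly the range we need. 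Each cut drops dimension by one with failure probability at most $d^r/|S|$ (the degree of the current variety stays $\leq d^r$ by \cref{thm:bez}), and a union bound over the $r+1$ steps gives total failure probability at most $(r+1)\,d^r/|S|$, comfortably inside the stated bound $d^{(r+1)m}/|S|$. After $r+1$ cuts the intersection has dimension $\leq -1$ and is therefore empty, forcing $Y \cap E \subseteq \{\vzero\}$.

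The main obstacle is that $G_{r+1}$ involves only $m - r$ of the $m$ ambient variables, which is one short of what a direct application of \cref{lem:intdim} to $Y \subset \bA^m$ (or to its projective closure $\bar Y \subset \bP^m$) would require at the final step. Projecting from the origin drops the ambient projective dimension by one, and thereby loosens the lemma's variable-count requirement by one at every step; combined with the triangular form of the $G_i$, this yields a uniform argument across all $r+1$ cuts without any special treatment of the last one.
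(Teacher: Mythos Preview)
Your argument is correct and in fact sharper than the paper's, giving failure probability $(r+1)d^{r}/|S|$ rather than $d^{(r+1)m}/|S|$. The paper takes a genuinely different route: it uses only the first $r$ forms $\ell_{1},\dots,\ell_{r}$ (together with $y_{0}$) and applies \cref{lem:intdim} to $Y^{p}\cap\bP^{m}_{\infty}$ in order to invoke \cref{thm:nn} and obtain a Noether-normalizing map $L:Y\to\bA^{r}$; it then bounds the size of the finite fibre $Q=L^{-1}(\vzero)\cap Y$ by the field-extension degree, which in turn is bounded by $d^{m(r+1)}$ via the Perron bound, and finally uses the last form $\ell_{r+1}$ only to separate $\vzero$ from the remaining points of $Q$ by a union bound over $|Q|$. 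Your approach bypasses both Noether normalization and Perron entirely: projecting $Y$ from the origin to $Y'\subset\bP^{m-1}$ and then cutting $r+1$ times with \cref{lem:intdim} is more direct, and the triangular ambient-reduction trick (passing to $H_{1}\cap\cdots\cap H_{i-1}\cong\bP^{m-i}$) is exactly what makes the variable-count hypothesis of the lemma hold at every step. One small point worth tightening: the inequality $\deg Y'\le d^{r}$ is asserted with only ``projection from a point is degree-nonincreasing''; since \cref{lem:properties} bounds only $\deg Y$, you should spell out that $Y'$ equals the projection of the projective closure $\bar Y\subset\bP^{m}$ from $[1{:}0{:}\cdots{:}0]$, and cite the standard fact that linear projection of an irreducible projective variety from a point does not increase degree.
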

That we pick the linear combinations so that the first involves all polynomials, the second involves all except $f_{1}$, the third involves all except $f_{1}, f_{2}$ and so on is crucial for the improvement in the degree bounds.
\begin{proof}[Proof of \cref{thm:polynomialreduction}]
  We prove this by studying the set $Y$ defined in \cref{lem:properties}.
  Let $F: \bA^{n} \to \bA^{m}$ be the map with coordinate functions $f_{i}$.
  Let $Y := \overline{F(\bA^{n})}$, the closure of the image of $F$ in $\bA^{m}$.
  We use $y_{1}, \dots, y_{m}$ to denote the coordinate functions of $\bA^{m}$.
  By \cref{lem:properties}, $Y$ has dimension $r$, and degree at most $D := d^{r}$.
  Let $Y^{p}$ be the projective closure of $Y$.
  Then $Y^{p}$ also has dimension $r$ and degree at most $D$.
  Let $\ell_{1}, \dots, \ell_{r+1}$ be the linear polynomials $\ell_{i} := \sum_{i\le j\le m} c_{ij} y_{j}$.

  Consider the subspace defined by $y_{0}, \ell_{1}, \dots, \ell_{r}$ in $\bP^{m}$.
  The variety $Y^{p} \cap \bP^{m}_{\infty}$, which is the intersection of $Y^{p}$ with the hyperplane at infinity defined by $y_{0} = 0$, has dimension $r-1$.
  Since $\ell_{1}, \dots, \ell_{r}$ are random linear polynomials and $Y^{p} \cap \bP^{m}_{\infty}$ is a variety of, degree at most $D$ and, dimension $r-1$, we can repeatedly apply Lemma \ref{lem:intdim} to get a bound on the probability of proper intersections.
  Let $H_{i}$ be the hyperplane defined by $\ell_{i}$.
  We apply Lemma \ref{lem:intdim} starting from $H_{r}$.
  The equation $\ell_{r}$ has $m-r+1$ coefficients, and therefore satisfies the conditions required for the lemma.
  By B\'ezout's theorem, the intersection has degree bounded by $D$, and dimension decreased by one.
  We then apply the theorem with $H_{r-1}$ and so on, as in the proof of \cref{thm:main}.
  In each iteration the variety considered has one less dimension than the previous iteration, but our linear polynomial has one more variable, and therefore we will always satisfy the conditions of Lemma \ref{lem:intdim}.

  We can now invoke \cref{thm:nn} to say that the map $\bP^{m} \to \bP^{r}$ with coordinate functions $(y_{0}, \ell_{1}, \dots, \ell_{r})$ is Noether normalizing for $Y^{p}$.
  We call this map $L'$.
  We use $z_{0}, \dots, z_{r}$ to denote the coordinate functions of $\bP^{r}$.
  The map $L'$ sends the affine chart $y_{0} \neq 0$ to the affine chart $z_{0} \neq 0$.
  Let $L$ be the restriction of $L'$ to this affine chart.
  Then $L$ defines a map from $\bA^{m}$ to $\bA^{r}$, which is Noether normalizing for the variety $Y$; we also call this restricted map $L$.
  More explicitly, the map $L$ has coordinate functions $(\ell_{1}, \dots, \ell_{r})$.
  Also, let the map $\bA^{m} \to \bA^{r+1}$ with coordinate functions $(\ell_{1}, \dots, \ell_{r+1})$ be labelled $M$.

  Since the map $L$ is Noether normalizing, it has finite fibres.
  Let $Q$ be the fibre of $\vzero$ in $Y$.
  We bound the size of this set.
  The map $L$ is Noether normalizing, and hence it is surjective.
  The image $\bA^{r}$ is normal, and hence the cardinality $|Q|$ of the fibre is bounded by the degree of the map \cite[Theorem~2.28]{shafarevich}.
  Here, by the degree of the map we mean the degree of $k\br{Y}$ over the pullback $L^{*}(k\br{\bA^{r}})$.
  Note that $k\br{Y}=k(f_{1}, \dots, f_{m})$, and $L^{*}(k\br{\bA^{r}})=k(\ell_{1}(\vf), \dots, \ell_{r}(\vf))$ after applying the same isomorphism.
  By Perron's bound, for each $i$ there exists an annihilator of $f_{i}, l_{1}(\vf), \dots, l_{r}(\vf)$ of degree at most $d^{r+1}$.
  The degree of the extension, and hence $\abs{Q}$, is bounded by $d^{m(r+1)}$.
  
  Further, no point of $Q$, other than $\vzero$, has all of the last $m-r$ coordinates as zero.
  This follows from the fact that $L^{-1}(\vzero)$ is a linear space of dimension $m-r$, and its intersection with $y_{r+1} = y_{r+2} = \cdots = y_{m} = 0$ has dimension $0$.
  Consider now the linear form $\ell_{r+1}$.
  For every $\vzero\ne q \in Q$, the probability that $\ell_{r+1}(q) = 0$ is at most $1/\abs{S}$.
  Therefore, with probability at least $1-d^{m(r+1)} / \abs{S}$, the polynomial $\ell_{r+1}$ is nonzero on every nonzero point of $Q$.  

  Consider the polynomials $g_{1}, \dots, g_{r+1}$, and let $G$ be the polynomial map $\bA^{n} \to \bA^{r+1}$ with coordinate functions $g_{i}$.
  By the choice of $\ell_{i}$ in the previous paragraph, the map $G$ is exactly the composition of the map $F:\bA^{n} \to \bA^{m}$ with $M:\bA^{m} \to \bA^{r+1}$.
  Let $Q$ be as defined earlier, the fibre of $\vzero$ under $L$.
  By construction, the set $M^{-1}(\vzero)$ is a subset of $Q$.
  But since the polynomial $\ell_{r+1}$ is nonzero on every nonzero point of $Q$, the set $M^{-1}(\vzero)$ consists only of $\vzero$.
  Therefore, $F^{-1}(M^{-1}(\vzero)) = F^{-1}(\vzero)$.
  Since $G = M \circ F$ we get $G^{-1}(\vzero) = F^{-1}(\vzero)$; which is the same as $V(\ideal{\vf}) = V(\ideal{\vg})$.
\end{proof}
We use the above to prove our 2nd main result:
%
\begin{proof}[Proof of \cref{thm:main2}]
  Using \cref{thm:polynomialreduction}, there exists polynomials $g_{1}, \dots, g_{r+1}$ of degrees $d_{1}, \dots, d_{r+1}$ that do not have a common root.
  By \cref{thm:jelonekeffective}, there exist $h'_{1}, \dots, h'_{r+1}$ such that $\deg g_{i} h'_{i} \leq \prod_{i=1}^{r+1} d_{i}$ such that $\sum g_{i} h'_{i} = 1$.
  In this equation, substitute back the linear-combination of $f_{1}, \dots, f_{m}$ for each $g_{i}$; whence we get the required $h_{i}$'s.
\end{proof}

\subsection{Computing transcendence degree: Proof of {\cref{thm:trdegcompute}}}
We give a method of `efficiently' computing the transcendence degree of input polynomials $f_{1}, \dots, f_{m}$.
By \cref{lem:properties} and the second part of \cref{thm:fibredim}, the transcendence degree can be computed if we know the dimension of a general fibre.
We need to get a bound on the points that violate the equality in \cref{thm:fibredim}.
For this we follow the classical proof of the theorem and give effective bounds wherever required.
\ifdefined\APPENDIX
  For convenience we have provided a proof sketch in \cref{appendix:fibredim}, for the special case we need.
\else
  For convenience we have provided a proof sketch in Appendix A, for the special case we need.
\fi

\begin{lemma}
  \label{lem:effectivedim}
  Let $h_{1}, \dots, h_{m}$ be polynomials of degree at most $d$ in $n$ variables, and let $W$ be the Zariski closure of the image of the map $\vh$ with coordinates $h_{i}$.
  Let $S\subset k$ be of size $6nd^{n}$.
  If $a_{1}, \dots, a_{n}$ are randomly picked from $S$, then with probability at least $5/6$, the fibre of $(h_{1}(\va), \cdots, h_{m}(\va))$ has codimension exactly $\dim W$.
\end{lemma}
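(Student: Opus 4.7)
The plan is to reduce the problem to analyzing fibres of a polynomial map into $\bA^r$, produce a short list of explicit low-degree polynomials in $\vx$ that certify the correct fibre dimension, and conclude by Schwartz--Zippel. First I would Noether-normalize $W$ by iterating \cref{lem:intdim} exactly as in the proof of \cref{thm:polynomialreduction}: this produces linear forms $\ell_1, \dots, \ell_r$ in the coordinates of $\bA^m$ such that $L := (\ell_1, \dots, \ell_r) : W \to \bA^r$ is finite and surjective. Setting $\phi := L \circ \vh : \bA^n \to \bA^r$, every coordinate $\phi_i$ is a linear combination of the $h_j$, so has degree at most $d$. Because $L$ is finite, $\vh^{-1}(\vh(\va)) \subseteq \phi^{-1}(\phi(\va))$, giving $\dim \vh^{-1}(\vh(\va)) \leq \dim \phi^{-1}(\phi(\va))$, while \cref{thm:fibredim} applied to $\vh : \bA^n \to W$ yields $\dim \vh^{-1}(\vh(\va)) \geq n - r$. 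Hence it suffices to show that $\dim \phi^{-1}(\phi(\va)) \leq n - r$ with probability at least $5/6$.

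Next I would extract explicit low-degree certificates via the Perron bound. After a permutation of the variables (harmless, since $\va$ is uniform on $S^n$), assume that $x_{r+1}, \dots, x_n$ form a transcendence basis of $k(\vx)$ over $k(\phi_1, \dots, \phi_r)$; such an ordering exists by matroid exchange. For each $i \in \{1, \dots, r\}$, the $n+1$ polynomials $\phi_1, \dots, \phi_r, x_{r+1}, \dots, x_n, x_i$ are algebraically dependent in $k[\vx]$, so the Perron bound provides a nonzero annihilator $P_i(y_1, \dots, y_r, z_{r+1}, \dots, z_n, w)$ of total degree at most $d^r$ satisfying $P_i(\phi_1, \dots, \phi_r, x_{r+1}, \dots, x_n, x_i) \equiv 0$ in $k[\vx]$. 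Writing $P_i$ as a polynomial in $w$ with leading coefficient $Q_i(y, z)$, I define $Q_i'(\vx) := Q_i(\phi_1(\vx), \dots, \phi_r(\vx), x_{r+1}, \dots, x_n) \in k[\vx]$, which has degree at most $d^{r+1}$.

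The certification claim is that whenever $Q_i'(\va) \neq 0$ for every $i$, every irreducible component $C$ of $\phi^{-1}(\phi(\va))$ containing $\va$ satisfies $\dim C \leq n - r$: specializing the annihilator at $\phi = \phi(\va)$ produces a polynomial relation for $x_i$ in $k[C]$ whose leading coefficient $Q_i(\phi(\va), x_{r+1}, \dots, x_n)$ is nonzero in $k[C]$ (it evaluates to $Q_i'(\va) \neq 0$ at $\va \in C$), so $x_i$ is algebraic over $k(x_{r+1}, \dots, x_n)|_C$ for each $i$, forcing $\dim C \leq n - r$. Schwartz--Zippel together with a union bound over the $r$ polynomials $Q_i'$ then gives $\Pr[\exists\, i : Q_i'(\va) = 0] \leq r \cdot d^{r+1}/|S| \leq 1/6$ when $r + 1 \leq n$ and $|S| = 6nd^n$ (the degenerate case $r = n$ is handled by the effective estimates in \cref{appendix:fibredim}, where Perron is applied directly to $\phi_1, \dots, \phi_n, x_i$). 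The main obstacle I anticipate is that irreducible components of $\phi^{-1}(\phi(\va))$ not containing $\va$ escape the local certificate above; resolving this is precisely the purpose of \cref{appendix:fibredim}, where one replaces the coordinate transcendence basis $x_{r+1}, \dots, x_n$ by a random linear one $\mu_1, \dots, \mu_{n-r}$, so that by Bertini-type genericity every bad component projects dominantly to the $\mu$-space and hence witnesses the nonvanishing of the leading coefficient.
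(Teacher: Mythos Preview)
Your overall strategy---Noether-normalise $W$ by random linear forms to reduce to an algebraically independent family $\phi_1,\dots,\phi_r$, then extract Perron annihilators and apply Schwartz--Zippel---is exactly the paper's proof. The reduction step is literally the paper's ``take $r$ random linear combinations of the $h_i$ as in the proof of \cref{thm:polynomialreduction}'', and the containment $\vh^{-1}(\vh(\va))\subseteq\phi^{-1}(\phi(\va))$ together with \cref{thm:fibredim} is precisely how the paper squeezes the codimension.

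The one genuine gap is the obstacle you yourself flag, and it is caused by a single choice: you expand the annihilator $P_i$ as a polynomial in the \emph{lone} variable $w=x_i$, so the leading coefficient $Q_i$ still involves $z_{r+1},\dots,z_n$. The nonvanishing condition $Q_i'(\va)\neq 0$ then depends on the coordinates $a_{r+1},\dots,a_n$ of the specific point $\va$, not merely on $\phi(\va)$; other components of the fibre need not see it. The paper (and \cref{appendix:fibredim}) dissolves the obstacle with a one-line change of viewpoint: regard the same annihilator as a polynomial in \emph{all} the $x$-type variables $z_0,z_1,\dots,z_{n-r}$ with coefficients in $k[y_1,\dots,y_r]$. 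The leading coefficient now lies in $k[y]$ alone, of weighted degree at most $\prod d_i$ by Perron, hence of ordinary degree at most $\prod d_i$ in $\vx$ after substituting $y_j=\phi_j(\vx)$. Because this condition depends only on $\vb:=\phi(\va)$, which is constant along the entire fibre, the substituted annihilator is a nonzero polynomial relation valid on \emph{every} irreducible component of $\phi^{-1}(\vb)$ simultaneously, and your local-versus-global issue never arises.

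In particular, \cref{appendix:fibredim} does not introduce a random linear transcendence basis or invoke any Bertini-type genericity; the fix you sketch is neither what the appendix does nor needed. Swapping which block of variables you take the leading coefficient in already closes the gap.
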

\begin{proof}
  First assume that the $h_{i}$ are algebraically {\em independent}.
  Then $W = \bA^{m}$.
  Let the input variables be labelled such that $x_{1}, \dots,$ $x_{n-m}, h_{1}, \dots, h_{m}$ are algebraically independent, and let $A_{j}(z_{0}$, $z_{1}, \dots,$ $z_{n-m}, w_{1}, \dots, w_{m})$ be the (minimal) annihilator of $x_{j}$ over this set of variables, that is $A_{j}(x_{j}, x_{1}, \dots, x_{n-m}, h_{1}, \dots, h_{m}) = 0$.
  \ifdefined\APPENDIX
    By the proof of \cref{thm:fibredim} (\cref{appendix:fibredim}), a sufficient condition for point $a_{1}, \dots, a_{n}$ to be such that $\vh(\va)$ has fibre of dimension exactly $n-m$ is that $A_{j}(x_{j}, x_{1}, \dots, x_{n-m}, h_{1}(\va), \dots, h_{m}(\va))$ is a nonzero polynomial.
  \else
    By the proof of \cref{thm:fibredim} (Appendix A), a sufficient condition for point $a_{1}, \dots, a_{n}$ to be such that $\vh(\va)$ has fibre of dimension exactly $n-m$ is that $A_{j}(x_{j}, x_{1}, \dots, x_{n-m}, h_{1}(\va), \dots, h_{m}(\va))$ is a nonzero polynomial.
  \fi
  The polynomial $A_{j}$, when treated as polynomials in variables $z_0, \dots, z_{n-m}$ with coefficients in $k\bs{w_{1}, \dots, w_{m}}$ are such that the leading monomial has coefficient a polynomial in $w_{1}, \dots, w_{m}$ of weighted-degree at most $\prod_{i=1}^{m} d_{i}$ (by Perron bound).
  By the polynomial identity lemma \cite{ore, demillolipton, schwartz, zippel}, if we pick each $a_{i}$ randomly from a set of size $6\prod_{i=1}^{m} d_{i}$ then, with probability at least $5/6$, none of the polynomials $A_{j}(x_{j}, x_{1}, \dots, x_{n-m}$, $h_{1}(\va), \dots, h_{m}(\va))$ is zero.
  In this case, the codimension of the fibre of $\vh(\va)$ is exactly $m$ as claimed.

  In the general case, the $h_{i}$ may be algebraically dependent, and $W$ is a subvariety of $\bA^{m}$.
  Suppose $\dim W= \trdeg{\vh} =: s$.
  Then we take $s$ many random linear combinations $g_{i}$ of the $h_{i}$, as in the proof of \cref{thm:main2}.
  The map defined by the $g_{i}$ is dense in $\bA^{s}$ and therefore the $g_{i}$ ($i\in[s]$) are algebraically independent.
  By the previous paragraph, point $\va$ picked coordinatewise from $S$ is such that the fibre of $\vg(\va)$ has codimension $s$.
  The fibre of $\vh(\va)$ is a subset of the fibre of $\vg(\va)$, and therefore it has codimension at least $s$.
  Finally, by \cref{thm:fibredim}, the fibre has codimension at most $s$, whence the fibre of $\vh(\va)$ has codim $=s$ as required.
\end{proof}

\begin{proof}[Proof of \cref{thm:trdegcompute}]
  For each $i$, upwards from $1$ to $n$, we do the following steps.
  We iterate till $i$ reaches transcendence degree $r$ of the $m$ polynomials.
  In the $i$-th iteration, we intersect $\bA^{n}$ with $n-i$ random hyperplanes $\ell_1,\dots,\ell_{n-i}$, as in the proof of \cref{thm:main} (Sec.\ref{sec-rad-memb}).
  Here, the coefficients are picked from a set $S$ of size at least $n\cdot18\prod_{i=1}^{m} d_{i}$.
  We therefore reduce the problem to $i$ variables.

  Randomly pick point $\va$ where each coordinate (of the $n$ many) is picked randomly from $S$. By Lemma \ref{lem:effectivedim} (\& \ref{lem:intdim}), with error-probability $\le 1/6n$, the point $\vf(\va)$ has intersected fibre of dimension $(n-r)-(n-i)=$ $(i-r)$. We need to check this algorithmically; which is done by interpolating the polynomials $\vf$ after hyperplane intersections, and then using \cref{thm:ll} (as detailed in Sec.\ref{sec-rad-memb}).
  If the intersected fibre dimension is zero, we have certified transcendence degree $=i=r$; so we halt and return $i$ as output.
  Else, we move to the next $i\mapsto i+1$. 
  The interpolation step above is performed by solving a linear system which has size polynomial in $d^{i}$ which is the count of the monomials of degree at most $d$ in $i$ variables.

  Note that for $i<r$, with error-probability $\le 1/6n$, the fibre of $\vf(\va)$ has an empty intersection with $\ell_1,\dots,\ell_{n-i}$; which is dim$=-1$ and hence gets verified by \cref{thm:ll}. 
  
  By a union bound therefore, with error-probability $\le 1/6$, the above algorithm gives the correct answer.
  For each $i$, the time complexity of the above steps is polynomial in $d^{i},m$, which is the time taken for the interpolation step and to verify zero-dimension of the fibre.
  Therefore the algorithm as a whole takes time polynomial in $d^{r},n,m$ as claimed.
\end{proof}

\section{Conclusion}
We give algorithms for radical membership and transcendence degree of systems of polynomials, in time that depends on the transcendence degree.
In both cases, our algorithms generalize the cases of `few' input polynomials.
We further give bounds on the degree of the Nullstellensatz certificates that depend on the transcendence degree of the input polynomials.
In all three cases, our bounds are significantly better than the previously known results in the regime when the transcendence degree is much smaller than the number of variables and the number of polynomials.

Our work leaves the natural open problem of designing efficient algorithms when the transcendence degree is `larger'.
\begin{itemize}
\item For the blackbox radical membership problem, given the NP-hardness of HN, it is unlikely that a significantly better algorithm exists (unless other restrictions are put on the input polynomials). 

\item Could our methods, and the core hard instance thus identified, help in proving that HN is in AM? Currently, this is known only partially \cite{koiranph}.

\item For the transcendence degree problem however, we know that the problem is in coAM $\cap$ AM, making it unlikely to be NP hard.
It is therefore likely that there is an efficient randomized algorithm whose time complexity is polynomial in $n$ and $m$.
This is already known in the case when the field has large/zero characteristic, and it is an open problem to extend this to other fields.
A first step might be to give a subexponential time algorithm for the problem that works without any assumptions.
\end{itemize}

\noindent
{\bf Acknowledgements. } 
We thank Ramprasad Saptharishi for introducing us to the universality of 3-generators ideal membership problem. Nitin Saxena~thanks the funding support from DST (DST/SJF/MSA-01/2013-14).

\bibliographystyle{alpha}
\bibliography{ref}

\ifdefined\APPENDIX
  \appendix
\section{Effective proof of fibre dimension theorem for a dominant map}
\label{appendix:fibredim}
In this section, we sketch a proof of \cref{thm:fibredim} and show that the condition on $A_{j}$ in the proof of \cref{thm:trdegcompute} is indeed sufficient.
We follow the proof in \cite{shafarevich}.
We only sketch the proof in the case when $Y = \bA^{M}$ and $X = \bA^{N}$; however we do not need `surjectivity' and only assume a dominant $f$.

Let $(b_{1}, \dots, b_{M}) \in \bA^{M}$ be a point in the image of $f$.
This point is defined by the equations $y_{1} - b_{1}, \dots, y_{M}-b_{M}$.
If $f_{1}, \dots, f_{M}$ are the coordinate functions of $f$, then the fibre $f^{-1}(\vb)$ is defined by the equations $f_{i} - b$ in $\bA^{N}$.
The nonempty fibre is defined therefore by $M$ equations; thus by \cref{thm:intersection}, every component has dimension at least $N-M$.

Now we prove the second part.
Since the map $f$ is dominant, the induced map $f^{*}: k\bs{Y} \to k\bs{X}$ is injective, and we identify $k\br{Y}$ with its image in $k\br{X}$ via $f^{*}$.
Then by dimension definition, $k\br{X}$ has transcendence degree $N-M$ over $k\br{Y}$.
Let $y_{1}, \dots, y_{M} \in k\br{Y}$ be the generators of $k\br{Y}$ over $k$ and let $x_{1}, \dots, x_{N}$ be the generators of $k\br{X}$ over $k$.
Let them be numbered so that $x_{1}, \dots, x_{N-M}, y_{1}, \dots, y_{M}$ are algebraically independent, and let $A_{j}$ be the annihilator of $x_{j}$ over this set.
The fibre $f^{-1}(\vb)$ of a point has coordinate ring generated by the restrictions of $x_{1}, \dots, x_{N}$ to the fibre.
Suppose the point $\vb$ is such that the annihilators $A_{j}$ are nonzero polynomials when we substitute $\vb$ for $\vy$.
Then $A_{j}$ continue to be annihilators of the restrictions of $x_{j}$ over the restrictions of $x_{1}, \dots, x_{N-M}$.
This shows that the fibre has dimension at most $N-M$, and combining this with the previous part, the fibres have dimension exactly $N-M$.

Therefore, the sufficient condition for the fibres to have this dimension is that all the $A_{j}$ remain nonzero polynomials when evaluated at the given point $\vb$.
This completes our proof sketch.

\section{Tools for zero-dimension testing}\label{appendix-0-dim}
We briefly discuss \cref{thm:ll}.
The two main tools used here are Lazard's algorithm \cite{lazard1981} for computing a multiple of the U-resultant \cite{macaulay1902}, and Canny's \cite{canny} study of this U-resultant.

Given input polynomials $f_{1}, \dots, f_{m}$, first the polynomials are reduced to $n+1$ many polynomials, by taking random linear combinations.
This does not change the root set, and we assume $n = m$.
The polynomials are then {\em deformed} and homogenized: set $h_{i}$ to be the homogenization of the polynomial $f_{i} + sx_{i}^{\deg f_{i}}$.
Here $s$ is a new variable.
This is due to \cite{canny}, and is useful because it gives control over the newly introduced roots at the hyperplane at infinity due to homogenization.

The next step is to use the algorithm of \cite{lazard1981} to compute matrices $M_{0}, \dots, M_{n}$ (with each entry a polynomial in $s$) that are such that the determinant of $\sum u_{i} M_{i}$ is the U-resultant up to a multiple.
To compute these matrices, first the matrices of multiplication by $x_{i}$ considered as a map between the homogeneous degree $D-1$ and degree $D:= \sum d_{i} - n$ components of the coordinate ring of the zeroset is considered.
A simultaneous base change is performed on these matrices, and then a subset of the columns of each matrix is returned as $M_{0}, \dots, M_{n}$.
The computation of these matrices is the most expensive step in the algorithm.
Since the number of $n$-variate monomials of degree $D$ is $d^{O(n)}$, the algorithm takes time just polynomial in $d^{n},m$.

The coefficient of the lowest degree term (in $s$) in the above determinant is a product of linear forms (in $\mathbf{u}$), whose coefficient correspond to the isolated zeros (that is, zeroes not part of a higher dimension component) of the original polynomial \cite{canny}.
However, computing this coefficient is prohibitively expensive, and therefore some specializations of the U-resultant are computed.
It is at this specialization stage that the algorithm decides the zero dimensionality.
First, a change of basis is done to ensure that all the roots of the homogenized polynomials have distinct first coordinates.
The variables $U_{0}, \dots, U_{n}$ are specialized to $(y, 1, 0, \dots, 0)$ where $y$ is a new variable.
The determinant of the matrix $\sum U_{i} M_{i}$ is now a polynomial in $y$ and $s$, and the coefficient of the lowest degree term is nonzero in $y$ if and only if the original system of polynomials has zeroset of dimension exactly $0$.
This last test can be done just by studying the characteristic polynomial of the matrix $M_{1} M_{0}^{-1}$, and this can also be done in time polynomial in $d^{n},m$.

\else
\fi
\end{document}